\def\confversion{0}
\def\ifconf{\ifnum\confversion=1}
\def\ifnotconf{\ifnum\confversion=0}
\def\showauthornotes{0}
\def\showkeys{0}
\def\showdraftbox{0}
\def\({\left(}
\def\){\right)}
\def\[{\left[}
\def\]{\right]}
\def\<{\left\langle}
\def\>{\right\rangle}
\def\implies{\Longrightarrow}
\newcommand{\place}{{{}\cdot{}}}
\let\geq\geqslant
\let\leq\leqslant
\let\epsilon\varepsilon
\def\FF{\ensuremath{\mathbb{F}}}
\def\NN{\ensuremath{\mathbb{N}}}
\def\RR{\ensuremath{\mathbb{R}}}
\def\One{\ensuremath{\mathbbm{1}}}
\def\cC{\ensuremath{\mathcal{C}}}
\definecolor{darkred}{rgb}{0.5,0,0}
\definecolor{darkgreen}{rgb}{0,0.35,0}
\definecolor{darkblue}{rgb}{0,0,0.55}
\newcommand{\Authornote}[2]{{\sf\small\color{red}{[#1: #2]}}}
\newcommand{\Authorcomment}[2]{{\sf \small\color{gray}{[#1: #2]}}}
\newcommand{\Authorfnote}[2]{\footnote{\color{red}{#1: #2}}}
\newcommand{\Authornote}[2]{}
\newcommand{\Authorcomment}[2]{}
\newcommand{\Authorfnote}[2]{}
\newcommand{\draftbox}{\begin{center}
  \fbox{%
    \begin{minipage}{2in}%
      \begin{center}%
        \begin{Large}%
          \textsc{Working Draft}%
        \end{Large}\\
        Please do not distribute%
      \end{center}%
    \end{minipage}%
  }%
\end{center}
\vspace{0.2cm}}
\newcommand{\draftbox}{}
\def\to{\rightarrow}
\def\epsilon{\varepsilon}
\def\phi{\varphi}
\def\implies{\Rightarrow}
\newcommand{\ie}{i.e.,\xspace}
\newcommand{\N}{{\mathbb{N}}}
\newcommand{\F}{{\mathbb F}}
\newcommand{\abs}[1]{\ensuremath{\left\lvert #1 \right\rvert}}
\newcommand{\one}{{\mathbf{1}}}
\newcommand{\Psymb}{\mathbb{P}}
\DeclareMathOperator*{\pPr}{\widetilde{\Psymb}}
\newcommand{\csos}{\widetilde{\mathcal{C}}}
\newcommand{\al}{\alpha}
\def\ProbabilityRender#1#2{%fancy probability command
  \@ifnextchar\bgroup%
  {\renderwithdist{#1}{#2}}
   {\singlervrender{#1}{#2}}
}
\def\singlervrender#1#2{%
   \ensuremath{\mathchoice
       {{#1}\left[ #2 \right]}
       {{#1}[ #2 ]}
       {{#1}[ #2 ]}
       {{#1}[ #2 ]}
   }
}
\def\renderwithdist#1#2#3{%
   \@ifnextchar\bgroup
   {\superfancyrender{#1}{#2}{#3}}
   {\ensuremath{\mathchoice
      {\underset{#2}{#1}\left[ #3 \right]}
      {{#1}_{#2}[ #3 ]}
      {{#1}_{#2}[ #3 ]}
      {{#1}_{#2}[ #3 ]}
     }
   }
}
\def\superfancyrender#1#2#3#4#5{
   \ensuremath{\mathchoice
      {\underset{#1}{{#1}}\left#4 #3 \right#5}
      {{#1}_{#2}#4 #3 #5}
      {{#1}_{#2}#4 #3 #5}
      {{#1}_{#2}#4 #3 #5}
   }
}
\newfont{\inhead}{eufm10 scaled\magstep1}
\DeclareMathOperator\supp{Supp}
\DeclareMathOperator{\Span}{\operatorname {span}}
\DeclareMathOperator{\im}{\operatorname{im}}
\newcommand{\el}{\ensuremath{\ell} }
\newcommand{\val}{{\sf val}}
\newcommand{\KLP}{\textup{KrawtchoukLP}}
\newcommand{\PKLP}{\textup{PartialKrawtchoukLP}}
\newcommand{\FKLP}{\textup{FullKrawtchoukLP}}
\newcommand{\Cc}{\mathcal{C}}
\newtheoremstyle{plain}%style name
  {\medskipamount}%space before
  {\smallskipamount}%space after
  {\slshape}%font used
  {0pt}%indentation
  {\bfseries}%modifier theorem head
  {.}%punctuation between theorem head and body
  { }%space after punctuation
  {\thmname{#1}\thmnumber{ #2}{\normalfont\thmnote{ (#3)}}}%theorem specifier
\theoremstyle{plain}
\newenvironment{proofsketch}[1][Proof (sketch)]{%
\begingroup
\begin{proof}[#1]}%
{\end{proof}\endgroup}
\setlist[enumerate]{label={\roman*.}, ref={(\roman*)}}
\let\emph\textit
\def\supp{\operatorname{supp}}
\newcommand{\config}{\textup{\textsf{Config}}}
\newcommand{\forbconfig}{\textup{\textsf{ForbConfig}}}
\newcommand{\Lin}{\textup{Lin}}
\DeclareMathOperator{\GL}{GL}
\definecolor{orange}{HTML}{FF7F00}
\def\st@r#1#2{\m@th\ooalign{$\hfil#1*\hfil$\cr$#1#2$}}
\def\starprod{\mathop{\mathpalette\st@r\prod}}
\title{Exact Completeness of LP Hierarchies for Linear Codes}
\author{Leonardo Nagami Coregliano\thanks{{\tt IAS}. {\tt lenacore@ias.edu}. This material is based upon work supported by the National Science Foundation, and by the IAS School of Mathematics.} \and
        Fernando Granha Jeronimo\thanks{{\tt IAS}. {\tt granha@ias.edu}. This material is based upon work supported by the National Science Foundation under Grant No. CCF-1900460.} \and
        Chris Jones \thanks{{\tt UChicago}. {\tt csj@uchicago.edu}. This material is based upon work supported by the National Science Foundation under Grant No. CCF-2008920.}}
\date{\today}
\begin{document}
\maketitle
\pagenumbering{roman}

\draftbox

\begin{abstract}
  Determining the maximum size $A_2(n,d)$ of a binary code of
  blocklength $n$ and distance $d$ remains an elusive open question
  even when restricted to the important class of linear codes.
  Recently, two linear programming hierarchies extending Delsarte's LP
  were independently proposed to upper bound $A_2^{\textup{Lin}}(n,d)$
  (the analogue of $A_2(n,d)$ for linear codes).
  One of these hierarchies, by the authors,
  was shown to be \emph{approximately} complete in the sense that the
  hierarchy converges to $A_2^{\textup{Lin}}(n,d)$ as the level grows
  beyond $n^2$.
  Despite some structural similarities, not even approximate
  completeness was known for the other hierarchy by Loyfer and Linial.

  In this work, we prove that both hierarchies recover the
  \emph{exact} value of $A_2^{\textup{Lin}}(n,d)$ at level $n$.
  We also prove that at this level the polytope of Loyfer and Linial
  is integral.
  Even though these hierarchies seem less powerful than general
  hierarchies such as Sum-of-Squares, we show that they have enough
  structure to yield exact completeness via pseudoprobabilities.
\end{abstract}

\thispagestyle{empty}

\newpage
%\tableofcontents
%\clearpage

\pagenumbering{arabic}
\setcounter{page}{1}

\section{Introduction}

A binary code is any subset of binary strings $\Cc \subseteq
\F^n_2$. Two fundamental parameters of a code are the size $\abs{\Cc}$
and the minimum (Hamming) distance $d$ between pairs of distinct
codewords. Determining the maximum size $A_2(n,d)$ of a binary code of
blocklength $n$ and distance $d$ remains an elusive open problem
despite much effort and interest in this fundamental
question~\cite{vanLint99,GRS:coding:notes,V19}.

When the distance is $d \coloneqq \lfloor \delta n \rfloor$ for some
constant $\delta \in (0,1/2)$, the growth of $A_2(n,d)$ is known to be
exponential in $n$. It is then convenient to consider the asymptotic
rate $R_2(\delta)$ defined as
\begin{align*}
  R_2(\delta)
  & \coloneqq
  \limsup_{n \to \infty} \frac{1}{n} \log_2 \left( A_2(n,\lfloor \delta n \rfloor)\right).
\end{align*}
Roughly speaking, the maximum size of a code grows as $2^{R_2(\delta)
  n}$ up to lower order terms.  However, the precise asymptotic rate
function $R_2(\delta)$ remains unknown, so this exponential growth is
not fully understood.

The best lower bound on $R_2(\delta)$ dates back to the work of
Gilbert \cite{G52} and Varshamov\footnote{Varshamov showed Gilbert's
  bound for general codes remains the same for linear codes.}
\cite{V57}. Their bound, known as the GV bound, follows from a simple
argument for the distance versus rate trade-off of random codes. The
best upper bound on $R_2(\delta)$ dates back to the work of McEliece,
Rodemich, Rumsey and Welch (MRRW) \cite{MRRW77} and it is based on
linear programming (LP) techniques. Specifically, $A_2(n,d)$ is upper
bounded by the value of an LP of
Delsarte~\cite{delsarte1973algebraic}, which they upper bound by
constructing a dual solution using the theory of orthogonal
polynomials.

Here, we will focus on the family of Delsarte's LPs used in the
so-called first MRRW bound\footnote{In~\cite{MRRW77}, they also
  analyze (in the second MRRRW bound) another family of LPs based on
  the Johnson association scheme.}, which is based on the so-called
Krawtchouk polynomials~\cite{vanLint99} and MacWilliams inequalities
\cite{Macwilliams63,MSG72} since this family is closer to our work.
The precise details of this family of LPs are not important at this
point.

Linear codes (i.e., linear subspaces) are arguably one of the most
important and widely studied classes of
codes~\cite{vanLint99,GRS:coding:notes}. We denote by $A_2^\Lin(n,d)$
and $R_2^\Lin(\delta)$ the versions of $A_2(n,d)$ and $R_2(\delta)$,
respectively, corresponding to linear codes. Even for this important
class of codes, the known lower and upper bounds for $R_2^\Lin(\delta)$
are the same as those for $R_2(\delta)$ for general codes.

Delsarte's linear programs are a convex relaxation for $A_2(n,d)$ and
there is a known gap between the value of the LP and the GV
bound~\cite{Samorodnitsky01,NS05}.  In other words, if the GV
bound is indeed tight, then Delsarte's LP is not sufficient to prove
it (this would be called an integrality gap of the LP).  For this reason, it is natural to look for approaches that are
provably sufficient to settle the growth of $A_2(n,d)$ while having
the hope of being amenable to theoretical analysis. Note that
Delsarte's LPs do not distinguish between general and linear codes,
hence they do not provide better bounds for $A_2^\Lin(n,d)$ nor to the
asymptotic rate $R_2^\Lin(\delta)$.

There have been attempts to improve the upper bound using stronger
convex relaxations of $A_2(n,d)$.  The problem of computing $A_2(n,d)$
is equivalent to computing the independence number of a graph whose
vertex set is $\F_2^n$ and pairs of vertices are adjacent if they
violate the minimum distance constraint. In principle, one can employ
general convex programming hierarchies such as
Sum-of-Squares~\cite{L07} or Sherali--Adams, which provably equal the
true value $A_2(n,d)$ at a sufficiently large level. Delsarte's LP is
equivalent to a convex relaxation for independent set known as
Schrijver's $\vartheta'$ function~\cite{Schrijver79,L07}.  This is a
slight strengthening of the Lov\'{a}sz $\vartheta$
function~\cite{Lov79}, which is equivalent to the first level of the
Sum-of-Squares hierarchy for independent set.  However, analyzing
these general hierarchies remains elusive; in fact it even remains
open to analyze an SDP proposed by Schrijver~\cite{Schrijver05}, which
lies between Delsarte's LP and the second level of the Sum-of-Squares
of hierarchy. The only convex programs we know how to analyze for this
problem are Delsarte's LPs, and there are now
a few different techniques for this
analysis~\cite{MRRW77, FT05,NS05,NavonS09,Samorodnitsky2021proof}.

Recently, two new convex programming hierarchies for $A^\Lin_2(n,d)$
were proposed, one by Coregliano, Jeronimo, and Jones~\cite{CJJ22}
(see also~\cite{J22} for an alternative exposition) and another by
Loyfer and Linial~\cite{LL22} (in fact both hierarchies can be defined
over general finite fields).  In this paper, we study these
hierarchies further.

The two hierarchies are similar in spirit but not exactly the same.
Both hierarchies are a family of LPs (rather than SDPs) that extend
Delsarte's LP into a hierarchy of tighter and tighter convex
relaxations for $A^\Lin_2(n,d)$, while retaining some structural
similarities with Delsarte's LP. Since Delsarte's LP is the only
convex program with known theoretical analysis, there is a hope that
analyzing these two new hierarchies may be possible.

The Krawtchouk hierarchy $\KLP^{\FF_q}_\Lin(n, d, \el)$
of~\cite{CJJ22} was shown to be \emph{approximately} complete beyond
level $n^2$ in the following sense.
The level of the hierarchy is $\el$, where $\el = 1$ recovers
Delsarte's LP.

\begin{theorem}[\cite{CJJ22}]\label{theo:completeness_informal}
  For $\el \ge \Omega_{\epsilon,q}(n^2)$, we have
  \begin{align*}
    A^\Lin_q(n,d) ~\leq~ 
    \val(\KLP^{\FF_q}_\Lin(n, d, \el))^{1/\ell} ~\leq~ (1+\epsilon) \cdot A^\Lin_q(n,d).
  \end{align*}
\end{theorem}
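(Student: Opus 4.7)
The first inequality $A^\Lin_q(n,d) \leq \val(\KLP^{\FF_q}_\Lin(n,d,\ell))^{1/\ell}$ is the feasibility side: it says the LP is a valid relaxation. I would prove it by constructing, from any linear code $C \subseteq \FF_q^n$ of distance at least $d$, a feasible solution of value $|C|^\ell$. The natural candidate uses the joint distance profile of $\ell$-tuples of codewords of $C$, interpreted as assignments to the hierarchy's variables. The verification that this satisfies the Krawtchouk-positivity constraints should amount to applying MacWilliams coordinate-wise in each of the $\ell$ slots, since direct products preserve linearity; the objective comes out to $|C|^\ell$ by summing over all tuples.

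The serious content is the upper bound $\val^{1/\ell} \leq (1+\epsilon)\, A^\Lin_q(n,d)$. My plan is a rounding argument: view an optimal LP solution as a pseudo-distribution over $\ell$-tuples of ``pseudo-codewords'' and extract an honest linear code whose size nearly matches $\val^{1/\ell}$. I would proceed in three steps. First, exploit the translation and automorphism invariance of $\KLP^{\FF_q}_\Lin$ to symmetrize and force $0$ into the code. Second, sequentially condition on one slot at a time taking a specific value drawn from the pseudo-distribution's support, which should reduce the level from $\ell$ to $\ell-1$ while revealing one genuine codeword per step. Third, after at most $n$ conditionings (one per dimension), the revealed vectors span a linear subspace $C' \subseteq \FF_q^n$ whose minimum distance is $\geq d$, enforced by the LP's distance constraints. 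The budget $\ell = \Omega_{\epsilon,q}(n^2)$ is spent so that the cumulative multiplicative loss across $O(n)$ conditioning steps, each costing at most a $(1+O(\epsilon/n))$ factor, stays within $1+\epsilon$.

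The main obstacle I foresee is that, unlike Sum-of-Squares, $\KLP^{\FF_q}_\Lin$ has no positive-semidefinite constraints to automatically validate the conditioning step. Instead, we have only Krawtchouk-positivity inequalities, so proving that conditioning produces a lower-level feasible solution is a nontrivial structural property that must be established using the combinatorics of Krawtchouk polynomials and MacWilliams-type identities. Additionally, ensuring the rounded object is a \emph{linear} code---rather than merely a set with good distance---relies on exploiting the linearity hard-coded into the hierarchy (likely through variables indexed by characters of $\FF_q^n$), so propagating this algebraic structure through the rounding, and controlling the $O(\epsilon/n)$ loss at each step tightly enough for $\ell$ to grow only quadratically in $n$, should together be the crux.
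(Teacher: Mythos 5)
The lower bound ($\KLP$ being a valid relaxation) is handled essentially as you propose: the true solution $a^\Cc_x = \1{\forall j,\, x_j \in \Cc}$ is feasible with value $\abs{\Cc}^\ell$, the Fourier constraints holding because the Fourier transform of the indicator of a linear code is (up to scaling) the indicator of the dual code, and tensoring over the $\ell$ slots preserves this. That part is fine.

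For the upper bound, your conditioning/rounding plan is not the route of~\cite{CJJ22}, and the obstacle you flag yourself is, as far as anyone knows, fatal rather than merely "nontrivial." Conditioning a level-$\ell$ pseudo-distribution on one slot to reveal a genuine codeword is a Lasserre/SoS device: it works because the moment matrix is PSD and Schur complementation preserves PSD-ness, which is exactly what guarantees the conditioned object is again feasible at level $\ell-1$. The Krawtchouk hierarchy enforces only nonnegativity of the Fourier coefficients of $a$, a diagonal constraint in one basis, and there is no known conditioning operator that maps feasible $\KLP$ solutions at level $\ell$ to feasible solutions at level $\ell-1$. Without that, step 2 has no proof, and I do not see a way to supply one within the LP. The budget accounting is also internally inconsistent: $O(n)$ conditionings at a $(1+O(\epsilon/n))$ cost each would balance at $\ell = O(n)$ slots consumed, which does not explain where $\Omega_{\epsilon,q}(n^2)$ comes from.

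The actual argument of~\cite{CJJ22}, as this paper describes it, avoids rounding entirely: the hierarchy is realized as Schrijver's $\vartheta'$ of a graph in an association scheme (the Fourier-diagonalized SDP being exactly the LP of~\cref{lp:unsym_klp}), and the bound is a counting argument over the \emph{unsymmetrized} $\vartheta'$ formulation. One upper-bounds $\vartheta'$ by a clique-cover-type count over linear codes; this overcounts by a factor polynomial in $q^{n}$, which washes out after taking $\ell$-th roots once $\ell \gtrsim n^2$. That is the origin of the $n^2$ threshold --- it is the scale at which a global overcount becomes a $(1+\epsilon)$ factor, not the depth of a conditioning tree. So the two routes differ in kind: you propose a local rounding scheme that needs structure the LP does not have, while~\cite{CJJ22} uses a global counting bound native to the $\vartheta'$ picture.
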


Our first result is the \emph{exact} completeness at level $n$ of the
Krawtchouk hierarchy as follows.

\begin{theorem}\label{thm:completeness_kraw}
  For $\ell \ge n$, we have $A_q^\Lin(n,d) = \val(\KLP_\Lin^{\F_q}(n,d,\ell))^{1/\ell}$.
\end{theorem}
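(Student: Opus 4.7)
The inequality $A_q^\Lin(n,d) \leq \val(\KLP_\Lin^{\FF_q}(n,d,\ell))^{1/\ell}$ is soundness of the hierarchy and is the left inequality of Theorem \ref{theo:completeness_informal}: an optimal linear code $C$ of size $A_q^\Lin(n,d)$ induces a canonical feasible LP solution of objective $|C|^\ell$, so $\val \geq A_q^\Lin(n,d)^\ell$. The content of Theorem \ref{thm:completeness_kraw} is therefore the matching upper bound $\val(\KLP_\Lin^{\FF_q}(n,d,\ell))^{1/\ell} \leq A_q^\Lin(n,d)$ at every level $\ell \geq n$, which sharpens Theorem \ref{theo:completeness_informal} both in the threshold (from $n^2$ down to $n$) and by removing the $(1+\eps)$ slack.

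The plan is to follow the pseudoprobability strategy flagged in the abstract. A feasible LP solution $\mu$ at level $\ell$ should be interpreted as a \emph{pseudoexpectation} over $\ell$-tuples of pseudo-codewords, recorded through the LP variables tracking their joint column-type statistics in $\FF_q^\ell$. The Krawtchouk/MacWilliams-type constraints of $\KLP_\Lin^{\FF_q}(n,d,\ell)$ are precisely the Delsarte-style nonnegativities that any honest expectation over $\ell$ codewords drawn from a distance-$d$ linear code must satisfy. The goal is to show that at $\ell = n$ this pseudoexpectation is in fact an honest expectation, yielding a nonnegative decomposition
\[
\mu \;=\; \sum_{C} \lambda_C\, \mu^{(C)},
\qquad \lambda_C \geq 0, \; C\subseteq \FF_q^n \text{ linear}, \; \min\operatorname{dist}(C) \geq d,
\]
where $\mu^{(C)}$ denotes the canonical LP solution induced by the uniform distribution on $C^\ell$. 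Linearity of the objective then gives $\val(\mu) = \sum_C \lambda_C |C|^\ell \leq A_q^\Lin(n,d)^\ell$, which is the desired upper bound.

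The structural reason to expect such a decomposition at precisely $\ell = n$ is dimensional: $n$ vectors in $\FF_q^n$ already carry enough algebraic information to span any linear subspace of $\FF_q^n$, so the level-$n$ joint column-type profile faithfully identifies ``which code is being sampled from.'' Concretely, I would try to peel off atomic contributions $\lambda_C \mu^{(C)}$ one code at a time, choosing candidate codes $C$ via linear-algebraic reconstructions from the support of $\mu$ (for example, identifying, for each $\lambda \in \FF_q^n$, the set of coordinates where the corresponding linear combination of pseudo-codewords vanishes, and forcing these sets to arise from a single honest codeword). The nonnegativity of the LP variables and the Krawtchouk-basis inequalities should guarantee that after each peel the residue remains feasible with a strictly smaller support.

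The main obstacle is making this decomposition rigorous: LP feasibility at level $n$ does not \emph{a priori} place $\mu$ in the convex hull of the $\mu^{(C)}$'s, and the argument must exhibit this either by an explicit extremal-ray analysis of the level-$n$ Krawtchouk polytope or, dually, by constructing a feasible dual solution of objective $A_q^\Lin(n,d)^\ell$. The matching of the level $\ell = n$ with the ambient dimension $n$ should be the place where the Krawtchouk constraints become ``just tight enough'' to enforce integrality, so I expect the heart of the proof to be a careful linear-algebraic identity rather than a convergence/limit estimate as in Theorem \ref{theo:completeness_informal}. Promoting the result from $\ell = n$ to $\ell > n$ should be routine, using monotonicity of the normalized value $\val(\KLP_\Lin^{\FF_q}(n,d,\ell))^{1/\ell}$ together with the lower bound $A_q^\Lin(n,d)$ that is valid at every level.
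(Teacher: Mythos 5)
Your high-level framing is exactly right: the bound $A_q^\Lin(n,d)^\ell \le \val$ is trivial from true solutions, the crux is the reverse inequality, and the right mental model is a pseudoexpectation over codes that you want to certify is honest, with $\ell = n$ being the threshold at which $\ell$ vectors can span any subspace of $\F_q^n$. The paper's proof also factors the $\GL_\ell(\F_q)$ action to index variables by subspaces $S \le \F_q^n$, and then applies M\"obius inversion on the subspace poset to pass from variables $\pPr[S \subseteq \csos]$ to variables $\pPr[S = \csos]$; integrality of a solution is then literally nonnegativity of the latter.

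However, the concrete plan you propose for the crux would fail. You aim to show that every feasible $\mu$ decomposes as a nonnegative combination $\sum_C \lambda_C \mu^{(C)}$, and you suggest proving this either by peeling off one code at a time or via an extremal-ray analysis of the level-$n$ polytope. Both of these are attempts to establish that the polytope itself is integral, and the paper explicitly proves this is false: Proposition~\ref{prop:notintegral} exhibits a feasible solution of the level-$\ell$ pseudoprobability program with a strictly negative $\pPr[S]$, for every $\ell$ and every $k_0 \ge 2$. So there is no nonnegative decomposition of an arbitrary feasible point, the extreme rays are not all integral, and peeling cannot terminate with all residues nonnegative. The missing idea is that one only needs \emph{optimum} solutions to be integral, and this is proved by a local improvement argument: if $\pPr$ is supported on some $S_{\min}$ of dimension strictly less than $k_0$, redistribute the mass on $S_{\min}$ uniformly among the spaces one dimension above it; checking the normalization, dimension, Fourier, and nonnegativity constraints shows this stays feasible, and the objective coefficient $\abs{S}^\ell$ jumps by a factor $q^\ell \ge q^n \ge m$ (the number of one-dimension-up superspaces), so the objective strictly increases. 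Hence any optimum is supported only on dimension-$k_0$ spaces, where the Fourier constraints directly force nonnegativity, and integrality follows. Your alternative suggestion of constructing a dual feasible solution of value $A_q^\Lin(n,d)^\ell$ would in principle also give the bound, but the paper does not take that route and it is not clear how to produce such a dual certificate directly.

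One smaller point: for $\ell > n$ the paper does not invoke monotonicity of $\val^{1/\ell}$ (which is not established); rather, the subspace-indexed pseudoprobability reformulation has the same variables and constraints for every $\ell \ge n$, with $\ell$ appearing only as an exponent, and the same mass-transfer argument applies verbatim.
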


Instead of relying on integrality of the feasible region (\ie it is
exactly the convex hull of true solutions corresponding to linear
codes) to deduce completeness as is the case for general hierarchies
such as Sherali--Adams or Sum-of-Squares, it is only possible to show
that optimum solutions are integral, giving an unusual proof of
completeness for a convex programming hierarchy. We also show that the
polytope of $\KLP_\Lin^{\F_q}(n,d,\ell)$ is \emph{never} integral
(see~\cref{prop:notintegral}). Nonetheless, any given non-integral
solution becomes infeasible as the level grows
(see~\cref{prop:intersectionintegral}).

The partial Krawtchouk hierarchy $\PKLP_\Lin^{\F_q}(n,d,\ell)$
of~\cite{LL22} is similar to the Krawtchouk hierarchy of~\cite{CJJ22},
but it has additional constraints and a different objective function
(see~\cref{sec:partial_kraw_comple}). Due to this different objective
function and the fact that the \emph{approximate} completeness proof
of~\cref{theo:completeness_informal} crucially relies on the objective
function of the Krawtchouk hierarchy being ``dense'', the proof
of~\cref{theo:completeness_informal} did not extend to the Loyfer and
Linial hierarchy. Our proof here of \cref{thm:completeness_kraw} does
extend to show that $\PKLP_\Lin^{\F_q}(n,d,\ell)$ is complete at level
$n$. More precisely, our second result is the following.
  
\begin{theorem}\label{thm:completeness_partial}
  For $\ell \ge n$, we have $A_q^\Lin(n,d) = \val(\PKLP_\Lin^{\F_q}(n,d,\ell))$.
\end{theorem}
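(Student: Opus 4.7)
The plan is to follow the scheme used to prove \cref{thm:completeness_kraw} and adapt it to the additional constraints and different objective of $\PKLP_\Lin^{\F_q}(n,d,\ell)$. The equality $A_q^\Lin(n,d) = \val(\PKLP_\Lin^{\F_q}(n,d,\ell))$ breaks into two inequalities. For $A_q^\Lin(n,d) \le \val(\PKLP_\Lin^{\F_q}(n,d,\ell))$, given any linear code $C \subseteq \F_q^n$ of distance $\ge d$ with $\abs{C} = A_q^\Lin(n,d)$, I would plug the canonical code solution associated to $C$ into $\PKLP_\Lin^{\F_q}(n,d,\ell)$, verify that it satisfies both the base Krawtchouk constraints and the extra Loyfer--Linial constraints, and check that its objective value equals $\abs{C}$.

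For the reverse inequality, the strategy is to show that any feasible solution at level $\ell \ge n$ is a convex combination of code solutions coming from genuine linear codes of distance $\ge d$. Since the variables of $\PKLP_\Lin^{\F_q}(n,d,\ell)$ overlap with those of $\KLP_\Lin^{\F_q}(n,d,\ell)$ and $\PKLP$ imposes strictly more constraints, I would invoke the pseudoprobability machinery used for \cref{thm:completeness_kraw}: at level $\ell \ge n$, the pseudo-expectation associated to a feasible solution refines to an honest probability distribution over linear subspaces of $\F_q^n$ of distance $\ge d$. Because the extra Loyfer--Linial constraints are linear in these variables, they are preserved under convex combinations, so each component of the decomposition is still a code solution that respects them. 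Applying linearity of expectation to the $\PKLP$ objective, which is designed so that a code solution coming from $C$ evaluates to $\abs{C}$, yields $\val(\PKLP_\Lin^{\F_q}(n,d,\ell)) \le A_q^\Lin(n,d)$.

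The main obstacle I anticipate is checking that the pseudoprobability extraction from the proof of \cref{thm:completeness_kraw} transfers cleanly to $\PKLP$. The approximate completeness argument of \cref{theo:completeness_informal} relied on ``density'' of the Krawtchouk objective and did not carry over to Loyfer--Linial; however, the exact argument at level $n$ uses the objective only at the very end, to turn the convex decomposition into an $A_q^\Lin$-bound, while the decomposition itself depends only on the constraint structure. Hence the extra Loyfer--Linial constraints should impose no new difficulty for the extraction, and the different objective merely changes the final accounting. The technical task I expect to spend the most effort on is confirming that the variables exposed by $\PKLP$ are sufficient to support the level-$n$ pseudoprobability extraction, which should follow from the containment between the two hierarchies.
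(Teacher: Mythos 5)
Your high-level plan---pass to a pseudoprobability formulation, decompose the solution, and account for the objective---is the right shape, but the argument for the reverse inequality has a genuine gap in the key step: you do not correctly identify \emph{why} the pseudoprobabilities are nonnegative.

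You write that ``the decomposition itself depends only on the constraint structure'' and that the objective is used ``only at the very end.'' This is not how the level-$n$ argument for $\KLP$ works, and it cannot be, because the polytope of the pseudoprobability formulation of $\KLP$ is \emph{not} integral at any level (\cref{prop:notintegral}). The paper's proof of \cref{thm:completeness_kraw} shows only that \emph{optimum} solutions are probability distributions, and it does so by a mass-transfer argument in which the objective is used essentially: shifting mass from a low-dimensional subspace upward strictly increases the objective, so an optimum solution cannot place mass on subspaces of dimension $< k_0$. If you invoke ``the pseudoprobability machinery'' for an arbitrary feasible point, as you propose, you do not get a convex combination of true solutions.

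There are two ways to repair this, and the paper uses both. One (matching the paper's main proof in \cref{sec:completeness_partial}) is to re-run the mass-transfer argument with the new objective $\sum_S \abs{S}\,\pPr[S]$; since it still strictly increases when mass is moved to higher-dimensional subspaces, optimum solutions of the weakened pseudoprobability program are again honest distributions. The other (\cref{prop:integ_ll_hier}) is to actually \emph{use} the extra Loyfer--Linial partial Fourier constraints: in the pseudoprobability formulation, choosing $r = n - \dim(U)$ and $T = U$ isolates $\abs{U}^r \pPr[U] \ge 0$, giving integrality of the whole polytope, with no optimality argument needed. Your remark that the extra constraints are ``preserved under convex combinations'' goes in the wrong direction and plays no role; what the constraints actually do is force nonnegativity of the Möbius-inverted variables directly. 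As written, your proposal asserts the decomposition without supplying either of these two arguments, so the central step is missing.
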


Curiously, we show that the additional constraints of the Loyfer and
Linial hierarchy make the polytope integral for $\ell \ge n$
(see~\cref{prop:integ_ll_hier}). This integrality
is not obvious from the original formulation of the hierarchy and it
relies on a new perspective uncovered by this work.

The exact completeness theorems at level $n$
(\cref{thm:completeness_kraw,thm:completeness_partial})
improve our understanding of these hierarchies, consolidating them as
provable approaches to resolve the longstanding question of
improving bounds for $A_2^{\Lin}(n,d)$,
and justifying them as natural objects in their own right.
The primary open research direction is a theoretical
analysis of these hierarchies to obtain tighter bounds on
$R_2^{\Lin}(\delta)$.
It is not clear which hierarchy is better suited for such a task:
the Krawtchouk hierarchy may be simpler to analyze, which is of
critical importance here, but the partial Krawtchouk hierarchy may provide
tighter values at the same level given its additional constraints.

\paragraph{Proof Outline.} We first briefly recall the approximate completeness proof
from~\cite{CJJ22}. The hierarchy $\KLP^{\F_q}_\Lin(n,d,\el)$ can be
seen as a symmetrization of the $\vartheta'$ of a graph from a
carefully chosen association scheme under the actions of symbol
permutation by $S_n$ and translation by $\F_q^n$
(see~\cite{delsarte1973algebraic,DL98} and~\cite[\S 5]{CJJ22} for more
on association scheme theory). The \emph{approximate} completeness is
then obtained via a counting argument over the \emph{unsymmetrized}
$\vartheta'$ formulation, which requires level $\el\geq n^2$ to yield
non-trivial bounds.

A key insight of this work is a novel third formulation of the
Krawtchouk hierarchy from which \emph{exact} completeness can be
obtained at level $n$.
Instead of factoring symmetries that lead to variables indexed by
Hamming weights, we now factor different symmetries leading to
variables indexed by linear subspaces.
Using a linear transformation (namely, M\"{o}bius inversion of the
poset of subspaces of $\F_q^n$), we then rewrite the LP in terms of
new variables that can interpreted as a pseudoprobability distribution
over linear codes (see~\cref{sec:kraw_comple}).
In this pseudoprobability formulation, integral solutions correspond
to true probability distributions and via a mass transfer argument we
show that optimum solutions are integral.
An interesting feature of this third formulation of the hierarchy is that the number of variables
and constraints remains constant regardless of the level (see~\cref{subsec:completeness_kt}).
Curiously, we show (\cref{prop:notintegral}) that the polytope of this
formulation is not integral, \ie there are non-optimum solutions that
are not integral.
As mentioned, these ideas also generalize to show that the partial
Krawtchouk hierarchy of~\cite{LL22} also has \emph{exact}
completeness for $\ell \ge n$ (see~\cref{sec:partial_kraw_comple}).

\noindent \textbf{Bibliographic Note:} A preliminary version of the
completeness of the $\KLP_\Lin^{\F_2}(n,d,\ell)$ hierarchy (over the
binary field) is included in the dissertation of one of the
authors~\cite{J22}.

% LocalWords:  codewords blocklength Varshamov McEliece Rodemich Rumsey Welch MRRW LPs Delsarte
% LocalWords:  Delsarte's Krawtchouk MacWilliams Nonnegativity integrality Sherali Schrijver's
% LocalWords:  Schrijver Loyfer Linial symmetrized

\section{Preliminaries}

We denote by $\F_q$ the finite field of size $q$ (which must be a
prime power). A \emph{code} of blocklength $n \in \mathbb{N}^+$ (over
$\F_q$) is a non-empty subset $\Cc \subseteq \F_q^n$. We denote by
$\Delta(x,y)\coloneqq\lvert\{i\in[n] \mid x_i\neq y_i\}\rvert$ the
\emph{Hamming distance} between $x,y \in \F_q^n$. The \emph{minimum
  distance} of a code $\Cc$ is the minimum of $\Delta(x,y)$ over all
distinct $x,y\in\Cc$. The \emph{rate} $r(\Cc)$ of $\Cc$ is defined
as $r(\Cc) \coloneqq \log_q(\abs{\Cc})/n$. We denote by $A_q(n,d)$ the
maximum size of a code of blocklength $n$ (over $\F_q$) and minimum
distance at least $d$. We say that $\Cc$ is \emph{linear} if it is an
$\F_q$-linear subspace. For linear codes, we have $r(\Cc) =
\dim_{\F_q}(\Cc)$. We denote by $A_q^\Lin(n,d)$ the analogue of
$A_q(n,d)$ when codes are required to be linear.

For $\al \in \F_q^n$, we denote by $\chi_\al \colon \F_q^n \to
\mathbb{C}$ the (additive) \emph{Fourier character} associated with
$\al$ and we denote by $\one_\al \colon \F_q^n \to \{0,1\}$ be the
\emph{indicator function} of $\al$. If $T$ is a vector space (over
$\F_q$), we will use the notation $S\leq T$ to mean that $S$ is a
subspace of $T$ and $S < T$ to mean that $S$ is a proper subspace of
$T$.

\medskip

The rest of this section is devoted to informal descriptions of the
hierarchies from~\cite{CJJ22} and~\cite{LL22} in their symmetrized
form. Since all arguments of this paper start from the unsymmetrized
versions in~\cref{lp:unsym_klp,program:partial_kt}
of~\cref{sec:kraw_comple,sec:partial_kraw_comple} to factor different
symmetries, the descriptions below serve only as guiding intuition and
will not be used in any proof.

The hierarchy from~\cite{CJJ22} extends Delsarte's LPs by considering
not only the Hamming weight of single codewords, but by also
considering the Hamming weights of every codeword in subspaces of
dimension up to a parameter $\ell \in \N^+$, which is the level of the
hierarchy. Given an $\ell$-tuple of words $(x_1,\ldots,x_\ell) \in
(\F_q^n)^\ell$, one associates a \emph{configuration} function mapping
$c\in\F_q^\el$ to the Hamming weight $\abs{\sum_{j\in[\el]} c_j\cdot
  x_j}$ (in the binary case, it is typical to naturally identify
$\F_2^\el$ with the set $2^{[\el]}$ of subsets of $[\el]$). The set of
functions $\F_q^\el\to\NN$ that are configurations of some $\el$-tuple
of codewords is denoted $\config$. If the words $x_1,\ldots,x_\ell$
belong to some linear code $\Cc$ of minimum distance $d$, then their
configuration cannot have numbers from
$[d-1]\coloneqq\{1,\ldots,d-1\}$ in its image. This means that if we
let $a_g$ be the number of tuples $(x_1,\ldots,x_\el)$ in $\Cc$ whose
configuration is $g$, then $a_g = 0$ whenever
$g\in\forbconfig\coloneqq\{h\in\config \mid
[d-1]\cap\im(h)\neq\varnothing\}$. It is clear that $a_0=1$ for the
zero configuration (as $0\in\Cc$ since $\Cc$ is linear) and that
$\lvert\Cc\rvert^\ell=\sum_{g\in\config} a_g$. Finally, by observing that
the Fourier transform of the indicator $\One_{\Cc}$ is (up to a
multiplicative constant) the indicator of the dual code $\Cc^\perp$,
hence a nonnegative function, one derives the so-called (higher-order)
MacWilliams inequalities based on a higher-order version of the
Krawtchouk polynomials. The level $\ell$ of this hierarchy for codes
over the field $\F_q$ is denoted by $\KLP^{\F_q}_\Lin(n,d,\ell)$ and
it is a relaxation (i.e., an upper bound) for $A_q(n,d)^\ell$. The
program in~\cref{lp:klp} provides an informal description of this
hierarchy, where variables are indexed by configurations and $K_h$ is
the higher-order Krawtchouk polynomial associated with configuration
$h$. In this formulation, it is immediate that the first level of this
hierarchy is simply Delsarte's LP. Since we will work with a different
formulation of the hierarchy (see~\cref{lp:unsym_klp}
in~\cref{sec:kraw_comple}), we point the interested reader
to~\cite{CJJ22,J22} for a more detailed description of this Hamming
weight formulation of the hierarchy.

\begin{figure}[!htb]
{\small
\begin{empheq}[box=\fbox]{align*}    
    \max \quad
    & \sum_{g \in \config} a_g
    \\
    \text{s.t.} \quad
    & a_0 = 1
    & &
    & & (\text{Normalization})
    \\
    & a_g = 0
    & & \forall g \in\forbconfig
    & &  (\text{Distance constraints})
    \\
    & \sum_{g \in \config} K_h(g)\cdot a_g \geq 0
    & & \forall h \in \config
    & & (\text{MacWilliams inequalities})
    \\
    & a_g \geq 0
    & & \forall g \in  \config
    & & (\text{Nonnegativity}).
  \end{empheq}}
  \caption{Informal description of $\KLP^{\F_q}_\Lin(n,d,\ell)$. Its optimum
    value is an upper bound for $A_q^{\Lin}(n,d)^\el$.}\label{lp:klp}
\end{figure}

As we mentioned in the introduction, Loyfer and Linial in~\cite{LL22}
independently proposed another linear programming hierarchy
$\PKLP^{\F_q}_\Lin(n,d,\el)$ for linear codes that bears many structural
similarities with the Krawtchouk hierarchy $\KLP^{\F_q}_\Lin(n,d,\ell)$
of~\cite{CJJ22}, but it is different in two important aspects. Firstly,
$\PKLP^{\F_q}_\Lin(n,d,\ell)$ uses a different objective function that sums only
over configurations in $\config_1\coloneqq\{g\in\config \mid \forall
c\in\supp(g), \{1\}\subseteq\supp(c)\}$ (configurations in $\config_1$
correspond to $\ell$-tuples of codewords of the form $(x_1,0,\ldots,0)$); this
provides an upper bound for $A_q^\Lin(n,d)$ (as opposed to its $\ell$th
power). Secondly, by using partial Fourier transforms as well as the usual
Fourier transform (see~\cite{LL22} or~\cref{sec:partial_kraw_comple} below for
more details), $\PKLP^{\F_q}_\Lin(n,d,\el)$ also has ``partial MacWilliams
inequality'' constraints that are not present in
$\KLP^{\F_q}_\Lin(n,d,\el)$. The program in~\cref{lp:pklp} provides an informal
description of this hierarchy, where variables are indexed by configurations and
$K_h^S$ is the partial higher-order Krawtchouk polynomial associated with
configuration $h$ and set $S\subseteq[\el]$.

\begin{figure}[!htb]
{\small
\begin{empheq}[box=\fbox]{align*}    
    \max \quad
    & \sum_{g \in \config_1} a_g
    \\
    \text{s.t.} \quad
    & a_0 = 1
    & &
    & & (\text{Normalization})
    \\
    & a_g = 0
    & & \forall g \in\forbconfig
    & &  (\text{Distance constraints})
    \\
    & \sum_{g \in \config} K_h^S(g)\cdot a_g \geq 0
    & & \forall h \in \config, \forall S\subseteq[\el]
    & & (\text{Partial MacWilliams inequalities})
    \\
    & a_g \geq 0
    & & \forall g \in  \config
    & & (\text{Nonnegativity}).
  \end{empheq}}
  \caption{Informal description of $\PKLP^{\F_q}_\Lin(n,d,\ell)$. Its optimum
    value is an upper bound for $A_q^{\Lin}(n,d)$. The nonnegativity constraints
    $a_g\geq 0$ are redundant as they are also obtained as the partial
    MacWilliams inequalities corresponding to $S=\varnothing$. The original
    formulation also enforces $\GL_\el(\F_q)$ symmetries, but these are omitted
    here for simplicity.}\label{lp:pklp}
\end{figure}

% LocalWords:  blocklength

\section{Exact Completeness of the Krawtchouk LP Hierarchy}\label{sec:kraw_comple}

In this section, we prove the exact completeness at level $n$ of the Krawtchouk hierarchy for linear
codes, namely, we show that $A^\Lin_q(n,d) = \val(\KLP_\Lin^{\F_q}(n,d,n))^{1/n}$. We first give an
alternative formulation of this hierarchy in terms of pseudoprobabilities
in~\cref{subsec:pseudoprob_formulation}. Using this representation, we then show the exact
completeness result in~\cref{subsec:completeness_kt}.

\subsection{A Pseudoprobability LP Formulation}\label{subsec:pseudoprob_formulation}

We first recall the unsymmetrized formulation of the hierarchy from
\cite{CJJ22} given in~\cref{lp:unsym_klp}; it corresponds to the
$\vartheta'$ formulation of the Krawtchouk hierarchy expressed in
``diagonalized'' form using the Fourier basis. Here, we use this
unsymmetrized formulation as our starting point. The interested reader
is referred to~\cite{CJJ22} for more details about the connection
between these equivalent formulations of the hierarchy.

\begin{figure}[!htb]
{\small
\begin{empheq}[box=\fbox]{align*}
  & \text{Variables: } a_x && x \in (\F_q^n)^\el \\
  \max \quad
  & \sum_{x \in (\F_q^n)^\el} a_{x}
  \\
  \text{s.t.} \quad
  & a_{0} = 1
  & &
  & & (\text{Normalization})
  \\
  & a_{(x_1, \dots, x_\el)} = 0
  & & \exists w \in \Span(x_1, \dots, x_\el).\; \abs{w} \in [d-1]
  & & (\text{Distance constraints})
  \\
  & \sum_{x \in (\F_q^n)^\el} a_x\chi_\alpha(x) \ge 0
  & & \forall \alpha \in (\F_q^n)^\el
  & & (\text{Fourier coefficients})
  \\
  & a_x = a_{-x}
  & & \forall x \in (\F_q^n)^\el
  & & (\text{Reflection})
  \\
  & a_x \geq 0
  & & \forall x \in (\F_q^n)^\el
  & & (\text{Nonnegativity}).
\end{empheq}}
  \caption{Unsymmetrized higher-order Krawtchouk hierarchy $\KLP^{\F_q}_\Lin(n,d,\ell)$ for $A_q(n,d)$.}\label{lp:unsym_klp}
\end{figure}

To each linear code $\Cc\le\F_q^n$, we have a corresponding \emph{true solution} $a^\Cc$ given by
\begin{align*}
  a^\Cc_x & \coloneqq \One[\forall j\in[\el], x_j\in\Cc],
\end{align*}
whose value is $\lvert\Cc\rvert^\el$. Note that $a^\Cc$ is feasible for the program
in~\cref{lp:unsym_klp} if and only if $\Cc$ has minimum distance at least $d$.

On the other hand, the program in~\cref{lp:unsym_klp} is invariant under the natural basis change
action of the general linear group $\GL_\el(\F_q)$; this means that by symmetrizing a solution $a$
under such action, we may assume that $a_x = a_y$ whenever $\Span(x) = \Span(y)$; after such
symmetrization, we can denote by $a_S$ ($S\le \F_q^n$) the value of $a_x$ for any $x\in(\F_q^n)^\el$
such that $\Span(x) = S$. Note that the true solutions $a^\Cc$ corresponding to \emph{linear} codes
$\Cc\le\F_q^n$ are already symmetrized:
\begin{align}\label{eq:aCcalt}
  a^\Cc_x & = \one[\Span(x)\subseteq\Cc] \eqqcolon a^\Cc_{\Span(x)}.
\end{align}

Equation~\eqref{eq:aCcalt} above suggests that we should interpret the variables $a_S$ as the
relaxation of the indicator $\one_{S \subseteq \Cc}$ for a code $\Cc$; or more precisely as $a_S =
\pPr[S \subseteq \csos]$, where $\csos$ is a formal variable that represents a code drawn from a
pseudodistribution of linear codes.

The next lemma uses M\"{o}bius inversion to provide a linear transformation into variables of the
form $\pPr[S = \csos]$ and shows that (symmetrized) integral solutions are precisely those in which
$\pPr[S = \csos]$ ($S\le\F_q^n$) is a (true) probability distribution (recall that a solution $a$ is
\emph{integral} if it is a convex combination of true solutions $a^\Cc$).

\begin{lemma}\label{lem:mobius}
  For every $S\le\F_q^n$, let $\pPr[S\subseteq\csos],\pPr[S = \csos]\in\RR$ be real numbers. Then
  the following are equivalent.
  \begin{enumerate}
  \item For every $S\le\F_q^n$, we have $\pPr[S\subseteq\csos] = \sum_{S\le
    T\le\F_q^n}\pPr[T=\csos]$.
    \label{lem:mobius:direct}
  \item For every $S\le\F_q^n$, we have $\pPr[S=\csos] = \sum_{S\le T\le\F_q^n}
    \mu(S,T)\pPr[T\subseteq\csos]$, where $\mu$ is the M\"{o}bius function of the poset of subspaces
    of $\F_q^n$ under inclusion.
    \label{lem:mobius:inverse}
  \end{enumerate}

  Furthermore, if $\pPr[S\subseteq\csos],\pPr[S = \csos]$ ($S\le\F_q^n$) satisfy the above, then for
  every $S\le\F_q^n$, we have
  \begin{align*}
    \pPr[S\subseteq\csos] & = \sum_{T\le\F_q^n} \pPr[T=\csos]\cdot a^T_S.
  \end{align*}
  In particular, the solution $a_S \coloneqq \pPr[S\subseteq\csos]$ ($S\le\F_q^n$) is integral if
  and only if $\pPr[S=\csos]$ ($S\le\F_q^n$) is a probability distribution.
\end{lemma}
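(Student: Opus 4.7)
The plan is to deduce the equivalence (i) $\iff$ (ii) as a direct instance of classical M\"obius inversion on a finite poset, and then read off the ``Furthermore'' identity and the integrality characterization as immediate consequences. Concretely, I would view $f(S) \coloneqq \pPr[S\subseteq\csos]$ and $g(S) \coloneqq \pPr[S=\csos]$ as real-valued functions on the finite poset $P$ of subspaces of $\F_q^n$ ordered by inclusion. Statement (i) then reads $f(S) = \sum_{T\ge S} g(T)$, while statement (ii) reads $g(S) = \sum_{T\ge S} \mu(S,T) f(T)$, so the two are equivalent by the standard (upward) M\"obius inversion theorem on $P$. No positivity hypothesis is needed here; the argument is purely algebraic and uses only that $P$ is a finite poset, i.e., that the zeta matrix $\zeta(S,T) = \one[S\le T]$ is invertible with inverse $\mu$.

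For the ``Furthermore'' identity, I would specialize~\eqref{eq:aCcalt} to $\Cc=T$, which gives $a^T_S = \one[S\le T]$. Substituting yields
\[
  \sum_{T\le\F_q^n} \pPr[T=\csos]\cdot a^T_S \;=\; \sum_{S\le T\le\F_q^n} \pPr[T=\csos] \;=\; \pPr[S\subseteq\csos],
\]
where the last equality is just (i). For the integrality characterization, the backward direction is then immediate: if $\pPr[T=\csos]$ is a probability distribution, this identity already expresses $a$ as a convex combination of the true solutions $a^T$, hence $a$ is integral. For the forward direction, I would write $a = \sum_\Cc \lambda_\Cc a^\Cc$ with $\lambda_\Cc \ge 0$ summing to $1$, plug into (ii), and interchange sums so that the inner M\"obius sum $\sum_{S\le T} \mu(S,T) \one[T\le\Cc]$ collapses to $\one[S=\Cc]$; this leaves $\pPr[S=\csos] = \lambda_S$, which is manifestly a probability distribution.

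I do not expect any serious obstacle; the lemma is essentially a change of basis via M\"obius inversion dressed in suggestive pseudoprobability notation. The one mildly delicate point is that the three parts of the lemma are interlocking statements about the same linear relation, so when writing the proof one should be careful to invoke the equivalence (i) $\iff$ (ii) and the ``Furthermore'' identity in the correct logical order and avoid circularity in the integrality argument.
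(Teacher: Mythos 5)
Your proof is correct and takes essentially the same route as the paper: both treat (i)~$\iff$~(ii) as M\"obius inversion on the subspace lattice (the paper derives the needed identity from the inductive definition of $\mu$ while you cite it as standard), and both obtain the ``Furthermore'' identity by substituting $a^T_S=\one[S\le T]$ into (i). The only real difference is that you make the forward direction of the integrality characterization explicit---plugging a presumed convex representation $a=\sum_\Cc\lambda_\Cc a^\Cc$ into (ii) and collapsing the M\"obius sum to recover $\lambda_S=\pPr[S=\csos]$---whereas the paper leaves the uniqueness of the coefficients implicit in its ``precisely when'' remark; your reading is the slightly more careful one.
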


\begin{proof}
  Recall that the M\"{o}bius function $\mu$ is inductively defined\footnote{In fact, one can show
    that $\mu(S,T) = (-1)^{\dim(T/S)} q^{\binom{\dim(T/S)}{2}}$ when $S\le T$ (and $0$ when
    $S\not\le T$), but we will not need this explicit formula.} by
  \begin{align*}
    \mu(S,T) & \coloneqq
    \begin{dcases*}
      1, & if $S = T$,\\
      - \sum_{S\leq U < T} \mu(S,U), & if $S < T$,\\
      0, & if $S \not\le T$,
    \end{dcases*}
  \end{align*}
  which in particular means that we have $\sum_{S\le U\le T} \mu(S,U) = \sum_{S\le U\le T} \mu(U,T)
  = \one[S = T]$ for every $S\le T\le\F_q^n$.

  For the implication~\ref{lem:mobius:direct}$\implies$\ref{lem:mobius:inverse}, note that for every
  $S\le\F_q^n$, we have
  \begin{align*}
    \sum_{S\le T\le\F_q^n} \mu(S,T)\pPr[T\subseteq\csos]
    & =
    \sum_{S\le T\le\F_q^n} \mu(S,T)\sum_{T\le U\le\F_q^n}\pPr[U=\csos]
    \\
    & =
    \sum_{S\le U\le\F_q^n}\pPr[U=\csos]\sum_{S\le T\le U}\mu(S,T)
    =
    \pPr[S=\csos].
  \end{align*}

  For the implication~\ref{lem:mobius:inverse}$\implies$\ref{lem:mobius:direct}, note that for every
  $S\le\F_q^n$, we have
  \begin{align*}
    \sum_{S\le T\le\F_q^n} \pPr[T = \csos]
    & =
    \sum_{S\le T\le\F_q^n} \sum_{T\le U\le\F_q^n} \mu(T,U)\pPr[U\subseteq\csos]
    \\
    & =
    \sum_{S\le U\le\F_q^n}\pPr[U\subseteq\csos]\sum_{S\le T\le U}\mu(T,U)
    =
    \pPr[S\subseteq\csos].
  \end{align*}

  For the second assertion, since $a^\Cc_S = \one[S\subseteq\Cc]$, from~\ref{lem:mobius:direct}, we
  have
  \begin{align*}
    \pPr[S\subseteq\csos]
    & =
    \sum_{T\le\F_q^n}\pPr[T=\csos]\cdot\one[S\le T]
    =
    \sum_{T\le\F_q^n}\pPr[T=\csos]\cdot a^T_S,
  \end{align*}
  that is, the solution $\pPr[\place\subseteq\csos]$ is written as the linear combination
  \begin{align*}
    \pPr[\place\subseteq\csos] & = \sum_{T\le\F_q^n}\pPr[T=\csos]\cdot a^T
  \end{align*}
  of the true solutions $a^T$; this linear combination is a convex combination precisely when
  $\pPr[T=\csos]\ge 0$ for every $T\le\F_q^n$ and $\sum_{T\le\F_q^n}\pPr[T=\csos] = 1$.
\end{proof}

The idea of the proof of completeness is to rewrite the linear program in terms of the variables
$\pPr[S = \csos]$ and then argue about the program from the perspective of the pseudoprobabilities.
For simplicity, let us now shorten the notation to $\pPr[S] \coloneqq \pPr[S = \csos]$.

\begin{figure}[!htb]
{\small
\begin{empheq}[box=\fbox]{align*}
  & \text{Variables: } \pPr[S] && S \leq \F_q^n \\
  \max \quad
  & \sum_{S \le \F_q^n} \abs{S}^\el \pPr[S]
  \\
  \text{s.t.} \quad
  & \sum_{S \leq \F_q^n} \pPr[S] = 1
  & &
  & & (\text{Normalization})
  \\
  & \pPr[S] = 0
  & & \exists w \in S.\; \abs{w} \in [d-1] %\text{if } \dim(S) > k_0
  & & (\text{Distance Constraints})
  \\
  & \sum_{S \leq U} \abs{S}^\el\pPr[S] \geq 0
  & & \forall U \leq \F_q^n
  & & (\text{Fourier coefficients})
  \\
  & \sum_{S \geq U}\pPr[S] \geq 0
  & & \forall U \leq \F_q^n
  & & (\text{Nonnegativity}).
\end{empheq}}
 \caption{$\KLP^{\F_q}_\Lin(n,d,\el)$ in terms of pseudoprobabilities for $\el \geq n$.}
 \label{fig:pseudoprobabilities_orig}
\end{figure}

\begin{lemma}\label{lem:pseudoprob}
  If $a$ is a $\GL_\el(\F_q)$-invariant solution of the program $\KLP^{\F_q}_\Lin(n,d,\el)$
  in~\cref{lp:unsym_klp} and $\pPr[\Span(x)\subseteq\csos] \coloneqq a_x$ for every
  $x\in(\F_q^n)^\el$, then $\pPr[S = \csos]$ given by~\cref{lem:mobius}\ref{lem:mobius:inverse} is a
  solution of the program in~\cref{fig:pseudoprobabilities_orig} with the same value.

  Conversely, if $\pPr[S=\csos]$ is a solution of the program
  in~\cref{fig:pseudoprobabilities_orig}, then setting $a_x \coloneqq \pPr[\Span(x)\subseteq\csos]$
  via~\cref{lem:mobius}\ref{lem:mobius:direct} gives a solution of $\KLP^{\F_q}_\Lin(n,d,\el)$ with
  the same value.
\end{lemma}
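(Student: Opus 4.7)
The plan is to verify constraint by constraint that the program in \cref{lp:unsym_klp} translates to the program in \cref{fig:pseudoprobabilities_orig} under the substitution $a_x = \pPr[\Span(x)\subseteq\csos]$, with \cref{lem:mobius} providing the bijection between the variables $a_S$ (indexed by subspaces after symmetrization) and the variables $\pPr[S]$.

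First I would observe that $\GL_\el(\F_q)$-invariance forces $a_x$ to depend only on $\Span(x)$: two tuples in $(\F_q^n)^\el$ lie in the same $\GL_\el(\F_q)$-orbit precisely when, viewed as $\el\times n$ matrices, they are row-equivalent, which happens iff they have the same row span. Since $\el \ge n$, every subspace $S\le\F_q^n$ arises as $\Span(x)$ for some $x$, so $a_S \coloneqq a_x$ (for any $x$ with $\Span(x) = S$) is well-defined. One then defines $\pPr[S]$ via \cref{lem:mobius}\ref{lem:mobius:inverse}, giving $a_S = \sum_{T \ge S} \pPr[T]$ by \cref{lem:mobius}\ref{lem:mobius:direct}.

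The heart of the proof is the translation of the Fourier coefficient constraint. For any $\alpha \in (\F_q^n)^\el$, I would compute
\begin{align*}
  \sum_{x \in (\F_q^n)^\el} a_x \chi_\alpha(x)
  & = \sum_{T \le \F_q^n} \pPr[T] \sum_{x \in T^\el} \chi_\alpha(x)
  = \sum_{T \le \F_q^n} \pPr[T] \cdot \abs{T}^\el \cdot \One[T \subseteq \Span(\alpha)^\perp],
\end{align*}
where the first equality uses $a_x = \sum_{T \ge \Span(x)} \pPr[T]$ together with $\Span(x)\subseteq T \iff x\in T^\el$, and the second uses the standard character-sum identity that $\sum_{y \in T} \chi_\beta(y)$ equals $\abs{T}$ if $\beta\in T^\perp$ and $0$ otherwise. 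Setting $U \coloneqq \Span(\alpha)^\perp$, the Fourier constraint $\sum_x a_x\chi_\alpha(x) \ge 0$ becomes exactly $\sum_{T\le U} \abs{T}^\el \pPr[T] \ge 0$, and because $\el \ge n$ the map $\alpha \mapsto \Span(\alpha)^\perp$ is surjective onto the subspaces of $\F_q^n$, producing the full family of Fourier constraints of \cref{fig:pseudoprobabilities_orig}. Specializing the same computation to $\alpha = 0$ yields the objective identity $\sum_x a_x = \sum_T \abs{T}^\el \pPr[T]$.

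The remaining constraints are routine: normalization $a_0 = 1$ becomes $\sum_T \pPr[T] = 1$; reflection follows from $\GL_\el(\F_q)$-invariance via $M = -I$; and nonnegativity $a_x \ge 0$ rewrites as $\sum_{T \ge S} \pPr[T] \ge 0$. For the distance constraints, I would use that the set of bad subspaces (those containing a word of weight in $[d-1]$) is upward-closed, so $a_S = 0$ for all bad $S$ if and only if $\pPr[T] = 0$ for all bad $T$: \cref{lem:mobius}\ref{lem:mobius:direct} gives one implication, and the other follows because M\"obius inversion gives $\pPr[T] = \sum_{U \ge T} \mu(T,U) a_U$ with each $U \ge T$ also bad, hence $a_U = 0$. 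The converse direction of the lemma is obtained by reading each equivalence in reverse, starting from $a_x \coloneqq \pPr[\Span(x) \subseteq \csos]$, which is manifestly $\GL_\el(\F_q)$-invariant since it depends only on $\Span(x)$. The main technical point throughout is the character-sum computation; once that identity is in hand, every other constraint reduces to bookkeeping with \cref{lem:mobius}.
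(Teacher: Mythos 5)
Your proof is correct and follows essentially the same route as the paper: substitute $a_x = \pPr[\Span(x)\subseteq\csos]$ via M\"obius inversion and translate each constraint, with the character-sum over $S^\el$ as the key computation (the paper proves this identity via a shift argument, you via tensor factorization into single-coordinate sums, which is equivalent). The only small addition you make beyond the paper's exposition is explicitly noting that $\el\geq n$ ensures $\alpha\mapsto\Span(\alpha)^\perp$ is surjective, so every Fourier constraint of the pseudoprobability program is actually produced; the paper leaves this implicit.
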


\begin{proof}
  We rewrite the ($\GL_\el(\F_q)$-symmetrization of) the program $\KLP^{\F_q}_\Lin(n,d,\el)$
  in~\cref{lp:unsym_klp} in terms of the variables $\pPr[S]$ obtained from
  $\pPr[S\subseteq\csos]\coloneqq a_S$ via~\cref{lem:mobius}.

  The rewritten objective function is
  \begin{align*}
    \sum_{x \in (\F_q^n)^\el} \pPr[\Span(x) \subseteq \csos]
    & =
    \sum_{x \in (\F_q^n)^\el} \sum_{\Span(x)\le T\le\F_q^n}\pPr[T]
    =
    \sum_{S \leq \F_q^n} \abs{S}^\el \pPr[S].
  \end{align*}

  The left-hand side of the distance constraint for $S\le\F_q^n$ such that there exists $w \in S$
  with $\abs{w} \in [d-1]$ is
  \begin{align*}
    \pPr[S\subseteq\csos]
    =
    \sum_{T \geq S} \pPr[T]
  \end{align*}
  By induction downwards on the dimension of $S$, requiring the above to be equal to $0$ is
  equivalent to the constraints
  \begin{align*}
    \pPr[S] = 0 \qquad (S \leq \F_q^n: \; \exists w \in S.\; \abs{w} \in [d-1]).
  \end{align*}

  The left-hand side of the Fourier constraint for $\al$ is
  \begin{align*}
    \sum_{x \in (\F_q^n)^\el} \pPr[\Span(x) \subseteq \csos] \chi_\al(x)
    & =
    \sum_{x \in (\F_q^n)^\el} \chi_\al(x) \sum_{\Span(x)\le T\le\F_q^n}\pPr[T]
    =
    \sum_{S \leq \F_q^n} \pPr[S] \sum_{\substack{x \in S^\el}} \chi_\al(x).
  \end{align*}

  Let us now show the following claim.
  \begin{claim}\label{clm:fourier_subspace_sum}
    For $\al \in (\F_q^n)^\el, S \leq \F_q^n$, we have
    \begin{align*}
      \sum_{x \in S^\el} \chi_\al(x) =
      \begin{dcases*}
      |S|^\el, & if $S\le \Span(\al)^\perp$,\\
      0, & otherwise.
      \end{dcases*}
    \end{align*}
  \end{claim}

  \begin{proof}[Proof of~\cref{clm:fourier_subspace_sum}.]
    If $S\le\Span(\al)^\perp$, then all terms of the sum are $1$, so the result follows. On the
    other hand, if $S\not\le\Span(\al)^\perp$, then there exist $y\in S$ and $\beta\in\Span(\al)$
    such that $\chi_\beta(y)\neq 1$. Write $\beta = \sum_{j\in[\el]} c_j\cdot\al_j$ for $c_j\in\F_q$
    and let $z\in S^\ell$ be given by $z_j \coloneqq c_j\cdot y$ ($j\in[\el]$). Then we have
    \begin{align*}
      \sum_{x\in S^\el} \chi_\al(x)
      & =
      \sum_{x\in S^\el} \chi_\al(x + z)
      =
      \chi_\al(z)\cdot\sum_{x\in S^\el} \chi_\al(x)
      =
      \chi_\beta(y)\cdot\sum_{x\in S^\el} \chi_\al(x),
    \end{align*}
    and since $\chi_\beta(y)\neq 1$, we conclude that $\sum_{x\in S^\el} \chi_\al(x) = 0$.
  \end{proof}

  From~\cref{clm:fourier_subspace_sum} above, it follows that the Fourier constraint for $\al$ is
  equivalent to
  \begin{align*}
    \sum_{S \leq \Span(\al)^\perp}|S|^\el\pPr[S] \geq 0,
  \end{align*}
  concluding the proof.
\end{proof}

It will also be convenient to consider a weakening of this formulation that is more amenable to
analysis. Let $k_0 \coloneqq \log_q( A_q^\Lin(n,d))$ be the maximum dimension of a linear code of minimum
distance at least $d$. The program of~\cref{fig:pseudoprobabilities} below is obtained from that
of~\cref{fig:pseudoprobabilities_orig} by replacing the distance constraints with the following
``dimension constraints''.  {
\begin{empheq}[box=\fbox]{align*}
  \qquad & a_{(x_1, \dots, x_\el)} = 0
  & & \text{if }\dim(\Span(x_1, \dots, x_\el)) > k_0
  & & (\text{Dimension constraints})
\end{empheq}}

\begin{figure}[!htb]
{\small
\begin{empheq}[box=\fbox]{align*}
  & \text{Variables: } \pPr[S] && S \leq \F_q^n \\
  \max \quad
  & \sum_{S \le \F_q^n} \abs{S}^\el \pPr[S]
  \\
  \text{s.t.} \quad
  & \sum_{S \leq \F_q^n} \pPr[S] = 1
  & &
  & & (\text{Normalization})
  \\
  & \pPr[S] = 0
  & & \text{if } \dim(S) > k_0
  & & (\text{Dimension constraints})
  \\
  & \sum_{S \leq U} \abs{S}^\el\pPr[S] \geq 0
  & & \forall U \leq \F_q^n
  & & (\text{Fourier coefficients})
  \\
  & \sum_{S \geq U}\pPr[S] \geq 0
  & & \forall U \leq \F_q^n
  & & (\text{Nonnegativity}).
\end{empheq}}
 \caption{$\KLP^{\F_q}_\Lin(n,d,\el)$, weakened to dimension constraints, in terms of pseudoprobabilities for $\el \geq n$.}
 \label{fig:pseudoprobabilities}
\end{figure}

\begin{lemma}\label{lemma:kraw_pseudoprob}
  The program in~\cref{fig:pseudoprobabilities} is a relaxation of $\KLP^{\F_q}_\Lin(n,d,\el)$.
\end{lemma}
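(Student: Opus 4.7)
The plan is to invoke Lemma~\ref{lem:pseudoprob} to reduce the statement to a comparison of the two pseudoprobability programs of Figures~\ref{fig:pseudoprobabilities_orig} and~\ref{fig:pseudoprobabilities}, and then to observe that the dimension constraints of Figure~\ref{fig:pseudoprobabilities} are logically implied by the distance constraints of Figure~\ref{fig:pseudoprobabilities_orig}. Since both programs have the same variables, objective function, normalization, Fourier, and nonnegativity constraints, the only thing to check is that every ``forbidden subspace'' of the dimension version is also forbidden in the original (distance) version; this will show that any feasible point of the latter is feasible for the former with identical value, so the optimum of Figure~\ref{fig:pseudoprobabilities} is at least that of Figure~\ref{fig:pseudoprobabilities_orig}, which in turn equals the optimum of $\KLP^{\F_q}_\Lin(n,d,\el)$.

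First I would remark that $\KLP^{\F_q}_\Lin(n,d,\el)$ is invariant under the natural action of $\GL_\el(\F_q)$, so by averaging we can assume the optimum is attained at a $\GL_\el(\F_q)$-invariant solution. Applying Lemma~\ref{lem:pseudoprob} in the direct direction then shows that the value of $\KLP^{\F_q}_\Lin(n,d,\el)$ equals the value of the program in Figure~\ref{fig:pseudoprobabilities_orig}. It therefore suffices to show that the optimum of Figure~\ref{fig:pseudoprobabilities} is at least the optimum of Figure~\ref{fig:pseudoprobabilities_orig}.

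The key observation is the following: if $S \le \F_q^n$ satisfies $\dim(S) > k_0$, then $S$ is itself a linear code of dimension strictly exceeding $k_0 = \log_q(A_q^\Lin(n,d))$, so by the very definition of $k_0$ as the maximum dimension of a linear code of minimum distance at least $d$, the subspace $S$ must have minimum distance less than $d$. This produces a nonzero $w \in S$ with $\abs{w} \in [d-1]$. Hence the distance constraints of Figure~\ref{fig:pseudoprobabilities_orig} force $\pPr[S] = 0$ for every such $S$, which is exactly the dimension constraint of Figure~\ref{fig:pseudoprobabilities}. All the other constraints (normalization, Fourier, nonnegativity) are identical in the two programs, as is the objective.

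Consequently, every feasible solution of Figure~\ref{fig:pseudoprobabilities_orig} is a feasible solution of Figure~\ref{fig:pseudoprobabilities} with the same objective value, so the latter is indeed a relaxation of the former, and hence a relaxation of $\KLP^{\F_q}_\Lin(n,d,\el)$. There is no real obstacle here; the lemma is essentially a bookkeeping step that isolates the weaker ``dimension'' version that will be easier to analyze in the completeness proof that follows.
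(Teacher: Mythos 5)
Your proof is correct and takes essentially the same route as the paper's one-line argument: the definition of $k_0$ guarantees that any subspace of dimension exceeding $k_0$ contains a nonzero word of weight in $[d-1]$, so the distance constraints of \cref{fig:pseudoprobabilities_orig} imply the dimension constraints of \cref{fig:pseudoprobabilities}, while all other constraints and the objective coincide. You spell out the preliminary reduction through \cref{lem:pseudoprob} (equating $\KLP^{\F_q}_\Lin(n,d,\el)$ with the program of \cref{fig:pseudoprobabilities_orig}) more explicitly than the paper does, but this is already established machinery and your added detail does not change the argument.
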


\begin{proof}
  Since $k_0\coloneqq\log_q( A_q^\Lin(n,d))$, any subspace of dimension larger than $k_0$ must have
  minimum distance less than $d$, so the distance constraints imply the dimension constraints. Thus,
  the result follows.
\end{proof}

From~\cref{lem:pseudoprob,lemma:kraw_pseudoprob}, to show exact completeness of
$\KLP^{\F_q}_\Lin(n,d,\el)$, it suffices to show that the weakened program
of~\cref{fig:pseudoprobabilities} has optimum value $A_q^\Lin(n,d)^\el$. The advantage of working
with the formulations that use the variables $\pPr[S]$ is that the Fourier constraints no longer
have sign alternations. However, the challenge is now to show that optimum solutions must force
$\pPr[S]$ to take nonnegative values.

\subsection{Exact Completeness Proof}\label{subsec:completeness_kt}

Before we start the proof, note that by level $n$ there is a variable for each possible basis of a
subspace of $\F_q^n$, which means that just writing down the distance constraints of the program
$\KLP^{\F_q}_\Lin(n,d,n)$ allows one to deduce the true value of $A_q^\Lin(n,d)$. However, the LP
hierarchy does not know how to use this kind of reasoning, hence our proof of completeness is more
involved. On the other hand, a feature of this subspace formulation of the hierarchy is that the
number of variables and constraints remains constant regardless of the level $\ell$ (as long as
$\ell\geq n$).

Note that we \emph{do not} show that the polytope is integral, meaning that feasible solutions are
integral (i.e., convex combinations of true solutions). In fact, we will see
in~\cref{prop:notintegral} that the polytope is not integral when $k_0\geq 2$.

We now restate and prove our main result.

\begin{theorem}\label{thm:completeness}
  For $\ell \ge n$, we have $A_q^\Lin(n,d) = \val(\KLP_\Lin^{\F_q}(n,d,\ell))^{1/\ell}$. More
  precisely, every $\GL_\el(\F_q)$-invariant optimum solution of $\KLP_\Lin^{\F_q}(n,d,\ell)$ is
  integral.
\end{theorem}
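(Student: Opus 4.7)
The plan is to work with the weakened pseudoprobability formulation of~\cref{fig:pseudoprobabilities}, a relaxation of $\KLP^{\F_q}_\Lin(n,d,\el)$ by~\cref{lem:pseudoprob,lemma:kraw_pseudoprob}. It suffices to prove that this relaxation has optimum value $q^{k_0\el}$ and that every optimum is an honest probability distribution supported on $k_0$-dimensional subspaces. The lower bound is immediate via the true solution $\pPr[\Cc]=1$ for any linear code $\Cc$ of dimension $k_0$ with minimum distance at least $d$; given the conclusion for the weakened LP, the distance constraints of the original LP further restrict the support to such codes, yielding integrality.

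For the upper bound and integrality, I plan to use LP duality. Because the weakened LP is invariant under $\GL_n(\F_q)$, averaging reduces the search to dual-feasible points whose weights $\gamma_U \ge 0$ on the Fourier constraints (with zero weight on the nonnegativity constraints) depend only on $\dim U$. Writing $\gamma_k$ for the common value at dimension $k$, the dual constraint at $S$ with $\dim S = k_0$ forces $\gamma_k = 0$ for $k \ge k_0$, and the dual constraints at $\dim S = k < k_0$ reduce to the upper-triangular system
\begin{align*}
\sum_{j=k}^{k_0-1} \gamma_j \binom{n-k}{j-k}_q = q^{(k_0-k)\el} - 1 \qquad (0 \le k < k_0),
\end{align*}
which uniquely determines $\gamma_k$ recursively from $k = k_0-1$ downward. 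Multiplying each Fourier constraint by $\gamma_U$, summing over $U$, and rearranging with normalization then collapses precisely to $\sum_S |S|^\el \pPr[S] \le q^{k_0\el}$ whenever each $\gamma_k \ge 0$. For integrality, strict positivity $\gamma_k > 0$ and complementary slackness force $\sum_{S \le U} |S|^\el \pPr[S] = 0$ at every optimum and every $U$ with $\dim U < k_0$; induction on $\dim U$ starting from $U = \{0\}$ (which gives $\pPr[\{0\}] = 0$) then propagates $\pPr[S] = 0$ to every $S$ with $\dim S < k_0$, and the nonnegativity constraint at each dim-$k_0$ subspace $U$ reduces to $\pPr[U] \ge 0$, so $\pPr$ is the desired probability distribution.

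The main obstacle is verifying that each $\gamma_k$ is strictly positive when $\el \ge n$. The recursion $\gamma_k = q^{(k_0-k)\el} - 1 - \sum_{j=k+1}^{k_0-1} \gamma_j \binom{n-k}{j-k}_q$ requires the leading term $q^{(k_0-k)\el}$ to dominate the correction, and this domination can be verified using the standard estimate $\binom{n-k}{j-k}_q \le C_q \cdot q^{(j-k)(n-j)}$ (for a bounded $q$-Pochhammer constant $C_q$) together with the inductive bound $\gamma_j \le q^{(k_0-j)\el} - 1$: each correction term carries exponent at most $(k_0-j)\el + (j-k)(n-j)$, which falls below $(k_0-k)\el$ by $(j-k)(\el - n + j) \ge j(j-k)$ exactly when $\el \ge n$. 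This quadratic-in-$(j-k)$ slack, after careful accounting of the $q$-Pochhammer factors, keeps the correction sum strictly below $q^{(k_0-k)\el} - 1$ and finishes the argument. The threshold $\el \ge n$ is essentially tight, as numerical experiments show the construction breaks down for $\el = n-1$.
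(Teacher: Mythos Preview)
Your approach via LP duality is genuinely different from the paper's and is a natural idea: the paper instead gives a direct primal argument, showing that if an optimum $\pPr$ has a minimum-dimension space $S_{\min}$ in its support with $\dim S_{\min}<k_0$, then pushing the mass of $S_{\min}$ equally up to its immediate superspaces stays feasible (the key point is that each Fourier constraint containing $S_{\min}$ also contains at least one superspace, and $\lvert S\rvert^\ell/\lvert S_{\min}\rvert^\ell=q^\ell\ge q^n$ beats the number $m\le q^n$ of superspaces) while strictly increasing the objective. No delicate inequality is needed.

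The gap in your argument is the positivity of the dual weights $\gamma_k$. Your sketched bound combines $\gamma_j\le q^{(k_0-j)\ell}$ with $\binom{n-k}{j-k}_q\le C_q\,q^{(j-k)(n-j)}$ to control the correction sum by $C_q\sum_{m\ge 1}q^{-(m+k)m}$ times the leading term. But for $q=2$ and $k=0$ this is $C_2\sum_{m\ge1}2^{-m^2}\approx 3.46\times 0.564\approx 1.95>1$, so the bound does \emph{not} show $\gamma_0>0$; the ``careful accounting of the $q$-Pochhammer factors'' you allude to does not close this: even using the sharper $\binom{n-k}{m}_q\le q^{m(n-k-m)}/\prod_{i=1}^m(1-q^{-i})$, the $k=0$ sum for $q=2$ is still about $1.17$. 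Numerics (e.g.\ $q=2$, $n=\ell\in\{4,5,6\}$) suggest the $\gamma_k$ \emph{are} positive, with $\gamma_0/q^{k_0 n}$ apparently converging to $\prod_{i\ge1}(1-q^{-i})$, so the claim may well be true---but it needs a substantially tighter inductive hypothesis on $\gamma_j$ than $\gamma_j\le q^{(k_0-j)\ell}$, or an explicit closed form. Until that is supplied, the duality route is incomplete, whereas the paper's mass-transfer argument avoids the issue entirely.
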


\begin{proof}[Proof of \cref{thm:completeness}]
  Since the program $\KLP_\Lin^{\F_q}(n,d,\el)$ is $\GL_\el(\F_q)$-invariant, the first assertion
  follows from the second assertion.
  
  An immediate consequence of~\cref{lem:mobius,lem:pseudoprob} is that to show integrality of
  $\GL_\el(\F_q)$-invariant optimum solutions of $\KLP_\Lin^{\F_q}(n,d,\el)$, it is sufficient to
  prove that every optimum solution $\pPr[S]$ ($S\le\F_q^n$) of the program
  in~\cref{fig:pseudoprobabilities_orig} is a probability distribution.

  Now we claim that it is sufficient to prove that every optimum solution of the program
  in~\cref{fig:pseudoprobabilities} is a probability distribution. Indeed, if this is the case, then
  the optimum value of both programs in~\cref{fig:pseudoprobabilities_orig,fig:pseudoprobabilities}
  must be $\lvert\F_q\rvert^{k_0\cdot\el}=A_q^\Lin(n,d)^\el$, since the definition of $k_0$ implies
  that there must be at least one true solution corresponding to a code $\cC$ of dimension $k_0$ and
  minimum distance at least $d$. In particular, every optimum solution of the former program must
  also be an optimum solution of the latter, hence a probability distribution.

  Let us then show that an optimum solution $\pPr$ of the program~\cref{fig:pseudoprobabilities} is
  a probability distribution. Since $\sum_{S\le\F_q^n}\pPr[S]=1$ already follows from the
  normalization constraint, we only have to show that $\pPr$ is nonnegative.

  If $S$ is a space in the support of $\pPr$ of minimum dimension, then $\pPr[S] \geq 0$ by the
  Fourier constraint on $S$. Thus, to show that $\pPr$ is nonnegative, it suffices to show that
  every such space of minimum dimension has dimension exactly $k_0$ (note that spaces of dimension
  larger than $k_0$ are not in the support of $\pPr$ due to the dimension constraints). To that end,
  let $S_{\min}$ be a subspace of minimum dimension in the support of $\pPr$, assume for the sake of
  contradiction that $\dim(S_{\min}) < k_0$ and let us show that there is a way to increase the
  objective value of $\pPr$. Indeed, we construct another solution $\widetilde{\Psymb}_+$ by
  transferring the probability mass from $S_{\min}$ and dividing it equally among the $S > S_{\min}$
  with $\dim(S) = \dim(S_{\min}) + 1$. Formally, letting $\mathcal{S} \coloneqq \{S \geq S_{\min} :
  \dim(S) = \dim(S_{\min}) + 1\}$ and $m \coloneqq \abs{\mathcal{S}}$ be the number of such spaces,
  we define:
  \begin{align*}
    \widetilde{\Psymb}_+[S]
    & \coloneqq
    \begin{dcases*}
      0, & if $S = S_{\min}$,\\
      \pPr[S] + \frac{\pPr[S_{\min}]}{m}, &
      if $S \geq S_{\min}$ and $\dim(S) = \dim(S_{\min}) + 1$,\\
      \pPr[S], & otherwise.
    \end{dcases*}
  \end{align*}

  Let us verify that $\pPr_+$ remains a feasible solution.
  \begin{itemize}
  \item $\pPr_+$ respects the normalization $\sum_{S \leq \F_q^n} \pPr_+[S] = 1$.
  \item The dimension constraints are not violated since $\dim(S) = \dim(S_{\min}) + 1 \leq k_0$ for
    the spaces $S \in \mathcal{S}$ in the second case above.
  \item In Fourier constraints with $U \not\geq S_{\min}$, nothing changes. In the ones with $U =
    S_{\min}$, the left-hand side is 0. Finally, when $U > S_{\min}$, $U$ contains at least one of
    the subspaces $S\in\mathcal{S}$ with increased mass. Therefore the change in the left-hand side
    is at least
    \begin{align*}
      \abs{S}^{\el}\cdot\frac{\pPr[S_{\min}]}{m}- \abs{S_{\min}}^\el\cdot\pPr[S_{\min}]\,.
    \end{align*}
    Since $m \leq \abs{\F_q}^{n}$ while $\frac{|S|^\el}{\abs{S_{\min}}^\el} = \abs{\F_q}^{\el} \geq
    \abs{\F_q}^n$, this is nonnegative.
  \item In the nonnegativity constraints, if $U$ is not below any space in $\mathcal{S}$, then nothing
    changes. If $U$ is below $S_{\min}$, then the sum in the nonnegativity constraint is unchanged
    since all $S \in \mathcal{S}$ appear in the sum.  Finally, if $U \in \mathcal{S}$, then the sum
    increased by $\pPr[S_{\min}]/m$.
  \end{itemize}

  Finally, note that objective value of the new solution ${\widetilde{\Psymb}_+}[S]$ is
  \begin{align*}
    \sum_{S \le \F_q^n} \abs{S}^\el {\widetilde{\Psymb}_+}[S]
    =
    \sum_{S \le \F_q^n} \abs{S}^\el \pPr[S] + \abs{S_{\min}}^\ell \(\abs{\F_q}^\ell - 1\) \pPr[S_{\min}],
  \end{align*}
  which is strictly larger than the previous objective value since
  $\pPr[S_{\min}] > 0$, a contradiction.

  Therefore, $\pPr$ must be supported only on spaces of dimension exactly $k_0$, it is nonnegative and
  integral, and the proof is complete.
\end{proof}

% LocalWords:  Krawtchouk pseudoprobabilities unsymmetrized diagonalized Nonnegativity symmetrizing
% LocalWords:  symmetrized pseudodistribution nonnegativity nonnegative blocklength integrality

\section{Exact Completeness of the Partial Krawtchouk LP Hierarchy}\label{sec:partial_kraw_comple}

The hierarchy from~\cite{LL22} differs from the one in the previous section in two ways. Firstly, besides the Fourier
constraints, it includes the following partial Fourier
constraints:
\begin{empheq}[box=\fbox]{align*}
  & \sum_{x \in (\F_q^n)^\el} a_x  \theta_\alpha(x) \ge 0
  & & \forall \alpha \in (\F_q^n)^\el, \theta_\alpha \coloneqq \theta_{\alpha_1} \otimes \cdots \otimes \theta_{\alpha_\ell}, \theta_{\alpha_i} \in \{\chi_{\alpha_i},\one_{\alpha_i}\}
  & & (\text{Partial Fourier})
\end{empheq}
In the expression above, $\one_{\alpha_i} \colon \F_q^n \to
\F_q^n$ is the indicator function of $\alpha_i$. Secondly,
its objective function is slightly different, meant to be a
relaxation for the value $A_q(n,d)$ rather than
$A_q(n,d)^\ell$.

We denote by $\PKLP^{\F_q}_\Lin(n,d,\el)$ the level $\ell$
of the partial Krawtchouk hierarchy for $A_q(n,d)$
from~\cite{LL22}. An unsymmetrized version of this hierarchy
is presented in~\cref{program:partial_kt}. The exact
description of the hierarchy factors $\GL_\el(\F_q)$ and
$S_n$ symmetries (see also~\cref{lp:pklp}).

\begin{figure}[!htb]
{\small
\begin{empheq}[box=\fbox]{align*}
  & \text{Variables: } a_x && x \in (\F_q^n)^\el \\
  \max \quad
  & \sum_{x_1 \in \F_q^n} a_{(x_1,0,\ldots,0)}
  \\
  \text{s.t.} \quad
  & a_{0} = 1
  & &
  & & (\text{Normalization})
  \\
  & a_{(x_1, \dots, x_\el)} = 0
  & & \exists w \in \Span(x_1, \dots, x_\el).\; \abs{w} \in [d-1]
  & & (\text{Distance constraints})
  \\
  & \sum_{x \in (\F_q^n)^\el} a_x  \theta_\alpha(x) \ge 0
  & & \forall \alpha \in (\F_q^n)^\el, \theta_\alpha \coloneqq \theta_{\alpha_1} \otimes \cdots \otimes \theta_{\alpha_\ell}, \theta_{\alpha_i} \in \{\chi_{\alpha_i},\one_{\alpha_i}\}
  & & (\text{Partial Fourier})  
  \\
  & a_x = a_y 
  & & \forall x,y \in (\F_q^n)^\el, \Span(x) = \Span(y)
  & & (\text{$\GL_\ell(\F_q)$-symmetries})
  \\
  & a_x \geq 0
  & & \forall x \in (\F_q^n)^\el
  & & (\text{Nonnegativity}).
\end{empheq}}
 \caption{Unsymmetrized partial Krawtchouk hierarchy $\PKLP^{\F_q}_\Lin(n,d,\el)$.}
 \label{program:partial_kt}
\end{figure}

To show exact completeness of the partial Krawtchouk hierarchy, we will first give an alternative
description in terms of pseudoprobabilities in a similar way as done for the Krawtchouk hierarchy
in~\cref{subsec:pseudoprob_formulation}. It is enough to show exact completeness for the following
weakening given in~\cref{program:partial_kt}, where only full Fourier constraints are included. Note
that $\FKLP^{\F_q}_\Lin(n,d,\el)$ is the same as hierarchy of~\cref{lp:unsym_klp}
from~\cref{sec:kraw_comple} with a different objective function.

\begin{figure}[!htb]
{\small
\begin{empheq}[box=\fbox]{align*}
  & \text{Variables: } a_x && x \in (\F_q^n)^\el \\
  \max \quad
  & \sum_{x_1 \in \F_q^n} a_{(x_1,0,\ldots,0)}
  \\
  \text{s.t.} \quad
  & a_{0} = 1
  & &
  & & (\text{Normalization})
  \\
  & a_{(x_1, \dots, x_\el)} = 0
  & & \exists w \in \Span(x_1, \dots, x_\el).\; \abs{w} \in [d-1]
  & & (\text{Distance constraints})
  \\
  & \sum_{x \in (\F_q^n)^\el} a_x\chi_\alpha(x) \ge 0
  & & \forall \alpha \in (\F_q^n)^\el
  & & (\text{Full Fourier})
  \\
  & a_x = a_y 
  & & \forall x,y \in (\F_q^n)^\el, \Span(x) = \Span(y)
  & & (\text{$\GL_\ell(\F_q)$-symmetries})  
  \\
  & a_x \geq 0
  & & \forall x \in (\F_q^n)^\el
  & & (\text{Nonnegativity}).
\end{empheq}}
 \caption{Unsymmetrized partial Krawtchouk hierarchy
   $\FKLP^{\F_q}_\Lin(n,d,\el)$, weakened to only full Fourier constraints.}
 \label{program:full_kt}
\end{figure}

\subsection{A Pseudoprobability LP Formulation}\label{subsec:pseudoprod_partial}

Similarly to~\cref{subsec:pseudoprob_formulation}, we will show that the program
in~\cref{program:pseudoprobabilities_partial_kt} is a reformulation of the weakening of the partial
Krawtchouk hierarchy $\FKLP^{\F_q}_\Lin(n,d,\el)$.

\begin{figure}[!htb]
{\small
\begin{empheq}[box=\fbox]{align*}
  & \text{Variables: } \pPr[S] && S \leq \F_q^n \\
  \max \quad
  & \sum_{S \le \F_q^n} \abs{S} \pPr[S]
  \\
  \text{s.t.} \quad
  & \sum_{S \leq \F_q^n} \pPr[S] = 1
  & &
  & & (\text{Normalization})
  \\
  & \pPr[S] = 0
  & & \text{if } \dim(S) > k_0
  & & (\text{Dimension constraints})
  \\
  & \sum_{S \leq U} \abs{S}^\el\pPr[S] \geq 0
  & & \forall U \leq \F_q^n
  & & (\text{Fourier coefficients})
  \\
  & \sum_{S \geq U}\pPr[S] \geq 0
  & & \forall U \leq \F_q^n
  & & (\text{Nonnegativity}).
\end{empheq}}
 \caption{$\FKLP^{\F_q}_\Lin(n,d,\el)$, weakened to dimension constraints and
   with only the full Fourier constraints, in terms of pseudoprobabilities for
   $\el \geq n$.}
 \label{program:pseudoprobabilities_partial_kt}
\end{figure}

\begin{lemma}\label{lemma:pseudoprob_partial_kraw}
  If $a$ is a $\GL_\el(\F_q)$-invariant solution of $\FKLP^{\F_q}_\Lin(n,d,\el)$
  from~\cref{program:full_kt} and $\pPr[\Span(x)\subseteq\csos]\coloneqq a_x$ for
  every $x\in(\F_q^n)^\el$, then $\pPr[S = \csos]$ given
  by~\cref{lem:mobius}\ref{lem:mobius:inverse} is a solution of the program
  in~\cref{program:pseudoprobabilities_partial_kt} with the same value.

  Conversely, if $\pPr[S=\csos]$ is a solution of the program
  in~\cref{program:pseudoprobabilities_partial_kt}, then setting
  $a_x\coloneqq\pPr[\Span(x)\subseteq\csos]$
  via~\cref{lem:mobius}\ref{lem:mobius:direct} gives a solution of
  $\FKLP^{\F_q}_\Lin(n,d,\el)$ with the same value.
\end{lemma}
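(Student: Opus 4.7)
The plan is to adapt the proof of Lemma~\ref{lem:pseudoprob} essentially verbatim, since most of the constraints of $\FKLP^{\F_q}_\Lin(n,d,\el)$ (normalization, distance/dimension, full Fourier, $\GL_\el(\F_q)$-symmetries, nonnegativity) coincide with those of $\KLP^{\F_q}_\Lin(n,d,\el)$. Under the change of variables $a_x \coloneqq \pPr[\Span(x)\subseteq\csos]$ coupled with the M\"obius inversion of Lemma~\ref{lem:mobius}, each of those constraints translates to the corresponding constraint in the program of~\cref{program:pseudoprobabilities_partial_kt} by exactly the same manipulation used previously, including the appeal to Claim~\ref{clm:fourier_subspace_sum} for the full Fourier constraints.

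The only genuinely new computation is the translation of the modified objective function $\sum_{x_1\in\F_q^n} a_{(x_1,0,\ldots,0)}$. Using $a_{(x_1,0,\ldots,0)} = \pPr[\Span(x_1,0,\ldots,0)\subseteq\csos]$ together with Lemma~\ref{lem:mobius}\ref{lem:mobius:direct}, I would write
\begin{align*}
\sum_{x_1\in\F_q^n} a_{(x_1,0,\ldots,0)}
&= \sum_{x_1\in\F_q^n}\pPr[\Span(x_1)\subseteq\csos]
= \sum_{x_1\in\F_q^n}\sum_{T\ge\Span(x_1)}\pPr[T=\csos] \\
&= \sum_{T\le\F_q^n}\pPr[T=\csos]\cdot\lvert\{x_1\in\F_q^n : x_1\in T\}\rvert
= \sum_{T\le\F_q^n}\lvert T\rvert\pPr[T=\csos],
\end{align*}
which is exactly the objective of the program in~\cref{program:pseudoprobabilities_partial_kt}. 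Note that we need $\GL_\el(\F_q)$-invariance only to ensure that $a_x$ is well-defined as a function of $\Span(x)$, which is enforced explicitly in~\cref{program:full_kt}.

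For the converse direction, one runs the same argument in reverse: given a feasible $\pPr[S=\csos]$, setting $a_x \coloneqq \sum_{T\ge\Span(x)}\pPr[T=\csos]$ produces a $\GL_\el(\F_q)$-invariant solution of $\FKLP^{\F_q}_\Lin(n,d,\el)$ (again the nontrivial step is just the Fourier constraint, which was already handled via Claim~\ref{clm:fourier_subspace_sum}), and the same swap of summations as above shows the objective values match. The potential obstacle one might worry about — whether the weakening to dimension constraints loses too much or whether the Fourier constraint translation behaves differently under the new objective — does not actually arise, because the constraint side of the program is literally identical to the weakening of $\KLP^{\F_q}_\Lin(n,d,\el)$ analyzed in~\cref{subsec:pseudoprob_formulation}. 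The proof is thus essentially a bookkeeping exercise on top of Lemma~\ref{lem:pseudoprob}.
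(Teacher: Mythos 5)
Your proposal is correct and takes essentially the same route as the paper: the paper's proof of this lemma likewise reduces to the proof of Lemma~\ref{lem:pseudoprob} (noting the constraint sides agree up to the $\GL_\el(\F_q)$-symmetrization) and then carries out the identical double-counting swap $\sum_{x_1\in\F_q^n}\sum_{T\ge\Span(x_1)}\pPr[T]=\sum_{T\le\F_q^n}\abs{T}\pPr[T]$ to translate the modified objective. Your side remark that $\GL_\el(\F_q)$-invariance is automatic because it is an explicit constraint of \cref{program:full_kt} is a correct and mild tidying-up of the paper's phrasing.
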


\begin{proof}
  Since the program of~\cref{program:partial_kt} is the same as the program of~\cref{lp:unsym_klp}
  except for the objective function, the proof is the same as that of~\cref{lem:pseudoprob} only
  differing in the objective function analysis. But note that the rewritten objective function is
  \begin{align*}
    \sum_{x=(x_1,0,\ldots,0) : x_1 \in \F_q^n} \pPr[\Span(x) \subseteq \csos]
    & =
    \sum_{x=(x_1,0,\ldots,0) : x_1 \in \F_q^n} \sum_{\Span(x)\le T\le\F_q^n}\pPr[T]
    =
    \sum_{S \le \F_q^n} \abs{S} \pPr[S],
  \end{align*}
  concluding the proof.
\end{proof}

For the exact completeness, it will be sufficient to consider the above weakened pseudoprobability
formulation of~\cref{lemma:pseudoprob_partial_kraw}. However, to cover some integrality properties
of~\cref{sec:integrality}, it will also be useful to give a pseudoprobability formulation of
$\PKLP^{\F_q}_\Lin(n,d,\el)$ that includes all partial Fourier constraints.

\begin{figure}[!htb]
{\small
\begin{empheq}[box=\fbox]{align*}
  & \text{Variables: } \pPr[S] && S \leq \F_q^n \\
  \max \quad
  & \sum_{S \le \F_q^n} \abs{S} \pPr[S]
  \\
  \text{s.t.} \quad
  & \sum_{S \leq \F_q^n} \pPr[S] = 1
  & &
  & & (\text{Normalization})
  \\
  & \pPr[S] = 0
  & & \exists w \in S.\; \abs{w} \in [d-1] 
  & & (\text{Distance Constraints})
  \\
  & \sum_{T \leq S \leq U} \abs{S}^r \pPr[S] \geq 0
  & & \forall T \le U \leq \F_q^n : \substack{n-\dim(U) \le r \le \ell,\\ \dim(T) \le \ell -r}
  & & (\text{Partial Fourier coefficients})
  \\
  & \sum_{S \geq U}\pPr[S] \geq 0
  & & \forall U \leq \F_q^n
  & & (\text{Nonnegativity}).
\end{empheq}}
 \caption{$\KLP^{\F_q}_\Lin(n,d,\el)$ in terms of pseudoprobabilities for $\el \geq n$.}
 \label{fig:pseudoprobabilities_partial_orig}
\end{figure}

\begin{lemma}\label{lem:pseudoprob_full}
  If $a$ is a $\GL_\el(\F_q)$-invariant solution of the program $\PKLP^{\F_q}_\Lin(n,d,\el)$
  in~\cref{program:partial_kt} and $\pPr[\Span(x)\subseteq\csos]\coloneqq a_x$ for every
  $x\in(\F_q^n)^\el$, then $\pPr[S = \csos]$ given by~\cref{lem:mobius}\ref{lem:mobius:inverse} is a
  solution of the program in~\cref{fig:pseudoprobabilities_partial_orig} with the same value.

  Conversely, if $\pPr[S=\csos]$ is a solution of the program
  in~\cref{fig:pseudoprobabilities_partial_orig}, then setting
  $a_x\coloneqq\pPr[\Span(x)\subseteq\csos]$ via~\cref{lem:mobius}\ref{lem:mobius:direct} gives a
  solution of $\PKLP^{\F_q}_\Lin(n,d,\el)$ with the same value.
\end{lemma}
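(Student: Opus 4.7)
The plan is to follow the same template as the proof of~\cref{lem:pseudoprob} (for the full Krawtchouk hierarchy), namely, to $\GL_\el(\F_q)$-symmetrize the program $\PKLP^{\F_q}_\Lin(n,d,\el)$ from~\cref{program:partial_kt} so that variables may be indexed by subspaces $\Span(x)$, and then use the change of variables $\pPr[\Span(x)\subseteq\csos]\coloneqq a_x$ together with the M\"{o}bius inversion of~\cref{lem:mobius} to rewrite every constraint in terms of $\pPr[T]$ ($T\le\F_q^n$). Since the normalization, distance, and nonnegativity constraints, as well as the objective, are unchanged from the proofs of~\cref{lem:pseudoprob,lemma:pseudoprob_partial_kraw}, the only new step is to rewrite the partial Fourier constraints; all other items may simply be quoted from those earlier proofs.

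For the partial Fourier constraint associated to $\alpha\in(\F_q^n)^\el$ and a choice $\theta_{\alpha_i}\in\{\chi_{\alpha_i},\one_{\alpha_i}\}$, I would split $[\el] = S_F\sqcup S_I$ according to whether the character or the indicator is used on each coordinate, and let $r\coloneqq\lvert S_F\rvert$. Substituting $a_x = \sum_{\Span(x)\le T}\pPr[T]$ and swapping summation order, the left-hand side becomes
\begin{align*}
  \sum_{T\le\F_q^n}\pPr[T]\sum_{\substack{x\in T^\el\\ x_i=\alpha_i\;\forall i\in S_I}}\prod_{i\in S_F}\chi_{\alpha_i}(x_i).
\end{align*}
The inner sum vanishes unless $\alpha_i\in T$ for every $i\in S_I$, i.e., unless $T\ge T_0\coloneqq\Span(\alpha_i:i\in S_I)$; and, when this holds, a one-coordinate version of~\cref{clm:fourier_subspace_sum} factors the product into $\prod_{i\in S_F}\sum_{x_i\in T}\chi_{\alpha_i}(x_i)$, which equals $\abs{T}^r$ when each $\alpha_i$ ($i\in S_F$) annihilates $T$, i.e., when $T\le U\coloneqq\Span(\alpha_i:i\in S_F)^\perp$, and zero otherwise. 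The constraint therefore reduces to $\sum_{T_0\le T\le U}\abs{T}^r\pPr[T]\ge 0$, with $\dim(T_0)\le\el-r$ and $\dim(U)\ge n-r$, matching exactly the partial Fourier coefficient constraints in~\cref{fig:pseudoprobabilities_partial_orig}.

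Conversely, to go from a solution $\pPr$ of the program in~\cref{fig:pseudoprobabilities_partial_orig} to one of $\PKLP^{\F_q}_\Lin(n,d,\el)$, I set $a_x\coloneqq\pPr[\Span(x)\subseteq\csos]$ via~\cref{lem:mobius}\ref{lem:mobius:direct} and run the same computation backwards: given any $\alpha\in(\F_q^n)^\el$ and any choice of tensor factors, the resulting partial Fourier expression on the $a_x$ side collapses to a constraint of the prescribed form with $T_0=\Span(\alpha_i:i\in S_I)$, $U=\Span(\alpha_i:i\in S_F)^\perp$, and $r=\abs{S_F}$, all of which are allowed parameter ranges; thus every partial Fourier constraint of $\PKLP^{\F_q}_\Lin(n,d,\el)$ is ensured. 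The $\GL_\el(\F_q)$-symmetry constraints are automatic from the indexing of $a_x$ by $\Span(x)$, while nonnegativity, normalization, distance, and objective transfer verbatim from~\cref{lem:pseudoprob,lemma:pseudoprob_partial_kraw}.

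The main obstacle is bookkeeping: making sure the admissible ranges $n-\dim(U)\le r\le\el$ and $\dim(T_0)\le\el-r$ arrive naturally from the parameters $\alpha$ and the split $[\el]=S_F\sqcup S_I$, and checking that as $\alpha$ and the $\theta_{\alpha_i}$ range over all choices, the pairs $(T_0,U,r)$ range exactly over the index set of the constraints in~\cref{fig:pseudoprobabilities_partial_orig}. This is a straightforward enumeration---every such triple is realized by choosing a basis of $T_0$ for $\{\alpha_i : i\in S_I\}$ and a basis of $U^\perp$ for $\{\alpha_i : i\in S_F\}$---but it is the one spot where care is required.
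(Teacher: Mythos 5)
Your proposal is correct and follows essentially the same route as the paper: symmetrize, pass to pseudoprobabilities via M\"{o}bius inversion, and rewrite only the new (partial) Fourier constraints by splitting $[\el]$ according to whether each coordinate uses a character or an indicator, yielding a constraint of the form $\sum_{T_0\le T\le U}\abs{T}^r\pPr[T]\ge 0$ with $T_0$, $U$, $r$ built from $\alpha$ and the split exactly as you describe; the paper packages your product-of-single-coordinate-sums step as a small claim (Claim~\ref{clm:partial_fourier_sum}) derived from Claim~\ref{clm:fourier_subspace_sum}, but this is the same computation. The one small imprecision is your phrase that the triples ``range exactly over the index set'': choices of $(\alpha,S_F,S_I)$ with $T_0\not\le U$ produce constraints whose sum is empty (hence trivial $0\ge 0$), so the derived family strictly contains trivial constraints not listed in Figure~\ref{fig:pseudoprobabilities_partial_orig}; what actually needs checking is that every listed triple is realized (by prescribing $\{\alpha_i:i\in S_I\}$ to span $T_0$ and $\{\alpha_i:i\in S_F\}$ to span $U^\perp$) and that every nontrivial derived constraint satisfies the stated dimension bounds, both of which you essentially observe.
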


\begin{proof}
  The program in~\cref{fig:pseudoprobabilities_partial_orig} is the same as the program
  in~\cref{program:pseudoprobabilities_partial_kt} with additional partial Fourier constraints. We
  can then follow the proof of~\cref{lem:pseudoprob} except for these additional constraints which
  we now analyze.

  For $\mathcal{I} \subseteq [\ell]$ and $\al\in(\FF_q^n)^\el$, let
  $\theta_\al\coloneqq\theta_{\alpha_1} \otimes \cdots \otimes \theta_{\alpha_\ell}$, where
  $\theta_{\alpha_i}\coloneqq\chi_{\alpha_i}$ for $i \in \mathcal{I}$ and
  $\theta_{\alpha_i}\coloneqq\one_{\alpha_i}$ for $i \in [\ell] \setminus \mathcal{I}$. The
  left-hand side of the Fourier constraint for $\theta_\al$ is
  \begin{align*}
    \sum_{x \in (\F_q^n)^\el} \pPr[\Span(x) \subseteq \csos] \theta_\al(x)
    & =
    \sum_{x \in (\F_q^n)^\el} \theta_\al(x) \sum_{T \geq \Span(x)}\pPr[T]
    =
    \sum_{S \leq \F_q^n} \pPr[S] \sum_{\substack{x \in S^\el}} \theta_\al(x).
  \end{align*}

  We now prove the following claim.
  \begin{claim}\label{clm:partial_fourier_sum}
    For $\mathcal{I} \subseteq [\ell]$, $\al \in (\F_q^n)^\el$ and $S \leq \F_q^n$, we have
    \begin{align*}
      \sum_{x \in S^\el} \theta_\al(x)
      & =
      \begin{dcases*}
        |S|^{\abs{\mathcal{I}}} \prod_{j\in [\ell]\setminus \mathcal{I}} \one_{\al_j \in S},
        & if $S \leq \Span(\al_i : i \in \mathcal{I})^\perp$,
        \\
        0, & otherwise.
      \end{dcases*}
    \end{align*}
  \end{claim}

  \begin{proof}[Proof of~\cref{clm:partial_fourier_sum}.]
    Clearly, if there exists $j\in[\el]\setminus\mathcal{I}$ such that $\al_j\notin S$, then the sum
    above is zero. Suppose then that for every $j\in[\el]\setminus\mathcal{I}$, we have $\al_i\in
    S$. Then the sum becomes
    \begin{align*}
      \sum_{x \in S^{\mathcal{I}}} \prod_{j\in \mathcal{I}} \chi_{\al_j}(x_j),
    \end{align*}
    and the result follows by~\cref{clm:fourier_subspace_sum}.
  \end{proof}

  Let $U \coloneqq \Span(\alpha_i : i \in \mathcal{I})^\perp$ and let $T \coloneqq
  \Span(\alpha_i : i \in [\ell] \setminus \mathcal{I})$. By~\cref{clm:partial_fourier_sum}, the
  Fourier constraint corresponding to $\theta_\al$ is equivalent to
  \begin{align*}
    \sum_{T \le S \leq U}|S|^{\abs{\mathcal{I}}} \pPr[S] \geq 0.
  \end{align*}

  Since the program in~\cref{fig:pseudoprobabilities_partial_orig} has the partial Fourier constraints
  \begin{align*}
    \sum_{T \leq S \leq U} \abs{S}^r \pPr[S] \geq 0
    \qquad \forall T \le U \leq \F_q^n \quad :
    \qquad \substack{n-\dim(U) \le r \le \ell,\\ \dim(T) \le \ell -r},
  \end{align*}
  it remains to show every $U$ and $T$ above can be obtained as $U = \Span(\alpha_i : i \in
  \mathcal{I})^\perp$ and $T = \Span(\alpha_i : i \in [\ell] \setminus \mathcal{I})$.

  Indeed, for every such $U \leq \F_q^n$ with $u\coloneqq\dim(U) \ge n-\ell$, we can use $r \in
  \{n-u,\ldots,\ell\}$ entries in the vector $\alpha$ to specify a spanning set for $U^\perp$. These
  entries will correspond to some $\mathcal{I} \subseteq [\ell]$ of size $r$. We then use the
  remaining $\ell-r$ entries of $\alpha$ to specify a spanning set for the space $T \le U$ of
  dimension at most $\ell - r$, concluding the proof.
\end{proof}

\subsection{Exact Completeness Proof}\label{sec:completeness_partial}

We now prove the exact completeness of the partial Krawtchouk hierarchy of~\cite{LL22}.

\begin{theorem}%\label{thm:completeness_partial}
  For $\ell \ge n$, we have $A_q^\Lin(n,d) = \val(\PKLP_\Lin^{\F_q}(n,d,\ell))$. More precisely,
  every $\GL_\el(\F_q)$-invariant optimum solution of $\PKLP_\Lin^{\F_q}(n,d,\ell)$ is integral.
\end{theorem}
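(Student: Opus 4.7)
The proof will closely parallel that of~\cref{thm:completeness}. The key observation is that the weakened program $\FKLP^{\F_q}_\Lin(n,d,\ell)$ of~\cref{program:full_kt} is a relaxation of $\PKLP^{\F_q}_\Lin(n,d,\ell)$, while on the other hand the true solutions corresponding to linear codes of dimension $k_0 \coloneqq \log_q(A_q^\Lin(n,d))$ and minimum distance at least $d$ are feasible for $\PKLP^{\F_q}_\Lin(n,d,\ell)$ with value exactly $A_q^\Lin(n,d)$. Hence it is sufficient to show that every $\GL_\el(\F_q)$-invariant optimum solution of $\FKLP^{\F_q}_\Lin(n,d,\ell)$ is integral: any $\PKLP$-optimum is automatically $\FKLP$-optimal and inherits integrality.

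By~\cref{lemma:pseudoprob_partial_kraw}, combined with the same relaxation from distance constraints to dimension constraints used in~\cref{sec:kraw_comple}, it then suffices to show that every optimum solution $\pPr$ of the pseudoprobability formulation in~\cref{program:pseudoprobabilities_partial_kt} is a genuine probability distribution. The plan is to repeat the mass-transfer argument from the proof of~\cref{thm:completeness}: let $S_{\min}$ be a subspace of minimum dimension in the support of $\pPr$; the Fourier constraint at $U = S_{\min}$ yields $\pPr[S_{\min}] \ge 0$, so the only thing left is to rule out $\dim(S_{\min}) < k_0$. Assuming this strict inequality toward contradiction, define $\widetilde{\Psymb}_+$ by zeroing out $\pPr[S_{\min}]$ and redistributing its mass uniformly over the $m$ subspaces $S > S_{\min}$ with $\dim(S) = \dim(S_{\min}) + 1$.

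The feasibility verification is essentially identical to the one in~\cref{thm:completeness}, since the dimension, Fourier, and nonnegativity constraints of~\cref{program:pseudoprobabilities_partial_kt} coincide with those of~\cref{fig:pseudoprobabilities}; in particular the inequality $\abs{S}^\ell / \abs{S_{\min}}^\ell = q^\ell \ge q^n \ge m$ still compensates the loss incurred in each Fourier constraint with $U > S_{\min}$. The only genuinely new ingredient is the objective computation: the objective is now $\sum_{S \le \F_q^n} \abs{S}\, \pPr[S]$, and since each of the $m$ recipient subspaces has size $q \cdot \abs{S_{\min}}$, the net change in the objective value is exactly $(q-1)\, \abs{S_{\min}}\, \pPr[S_{\min}]$, which is strictly positive whenever $\pPr[S_{\min}] > 0$. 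This contradicts optimality and forces the support of $\pPr$ to lie entirely in subspaces of dimension exactly $k_0$; by~\cref{lem:mobius} this makes $\pPr$ a probability distribution and the corresponding LP solution integral. I do not anticipate any substantive obstacle beyond carefully matching this bookkeeping, as the conceptual content is exactly the same mass-transfer idea applied to a different objective function; the additional partial Fourier constraints of $\PKLP^{\F_q}_\Lin$ play no role in the argument because they are simply dropped in the weakening.
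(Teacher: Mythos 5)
Your proposal is correct and takes essentially the same route as the paper: pass to the pseudoprobability reformulation, observe that its feasible region coincides with that of the Krawtchouk hierarchy's pseudoprobability formulation, and rerun the mass-transfer argument. Your explicit bookkeeping giving a net objective change of $(q-1)\abs{S_{\min}}\,\pPr[S_{\min}] > 0$ is sound and simply spells out the paper's terser observation that the argument only needs the objective to strictly increase when mass moves to higher-dimensional subspaces.
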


\begin{proof}
  By~\cref{lemma:pseudoprob_partial_kraw} and similarly to~\cref{thm:completeness}, it is enough to
  show that every optimum solution $\pPr$ of the pseudoprobability program
  in~\cref{program:pseudoprobabilities_partial_kt} is nonnegative.

  Note that the feasible region of this pseudoprobability program is the same as the one of the
  pseudoprobability program of~\cref{fig:pseudoprobabilities}, so we can follow the same
  completeness proof of~\cref{thm:completeness}, except for the objective function
  analysis. Inspecting that proof, we see that it only requires the property that the objective
  value increases if mass is moved to larger dimensional spaces. This property is also satisfied by
  the new objective function $\sum_{S \le \F_q^n} \abs{S} \pPr[S]$ so we are done.
\end{proof}

% LocalWords:  Krawtchouk Loyfer Linial unsymmetrized Nonnegativity pseudoprobabilities integrality
% LocalWords:  Pseudoprobability nonnegativity pseudoprobability nonnegative

\section{On Integrality Related Properties}
\label{sec:integrality}

In this section, we discuss some properties related to the integrality of the Krawtchouk
hierarchies. Recall that by the results of~\cref{sec:kraw_comple,sec:partial_kraw_comple},
integrality of $\GL_\el(\F_q)$-invariant solutions is equivalent to nonnegativity of solutions in the
pseudoprobability formulations; as such, we will slightly abuse notation and say that the polytope
of the pseudoprobability formulation is integral when all its feasible solutions are nonnegative.

We start by showing that the polytope of the pseudoprobability formulation of the program
$\KLP_\Lin^{\F_q}(n,d,\ell)$ is not integral no matter how large is the level of the hierarchy.

\begin{proposition}\label{prop:notintegral}
  The polytope defined by the constraints of the pseudoprobability formulation
  from \cref{fig:pseudoprobabilities_orig} is not integral for any $k_0 \ge 2$.
\end{proposition}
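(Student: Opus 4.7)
The plan is to exhibit an explicit feasible pseudoprobability $\pPr$ with at least one strictly negative coordinate; by the convention preceding the proposition, this witnesses non-integrality of the polytope. Since $k_0\geq 2$, there exists a $2$-dimensional linear code $\cC\leq \F_q^n$ of minimum distance at least $d$, and every subspace of $\cC$ also has minimum distance at least $d$, so the distance constraints will automatically be satisfied for any $\pPr$ supported on subspaces of $\cC$.

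Let $\cC_1,\ldots,\cC_{q+1}$ denote the $q+1$ one-dimensional subspaces of $\cC$. For a small parameter $\eps>0$ to be chosen later, I would set $\pPr[\cC]\coloneqq\eps$, $\pPr[\cC_i]\coloneqq -\eps/(q+1)$ for each $i$, $\pPr[\{0\}]\coloneqq 1$, and $\pPr[S]\coloneqq 0$ for all other $S\leq\F_q^n$. Normalization holds because the negative mass on the $\cC_i$'s cancels the positive mass on $\cC$. The only nontrivial nonnegativity constraints $\sum_{S\geq U}\pPr[S]\geq 0$ occur at $U=\cC_i$, where the sum equals $\eps q/(q+1)\geq 0$, and at $U=\cC$, where it equals $\eps\geq 0$.

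The bulk of the work is verifying the Fourier constraints $\sum_{S\leq U}\abs{S}^\ell\pPr[S]\geq 0$. Only subspaces in the support of $\pPr$ contribute. Using the elementary observation that any two distinct $1$-dimensional subspaces of $\cC$ span $\cC$, if $\cC\not\leq U$ then at most one $\cC_i$ satisfies $\cC_i\leq U$. This reduces the verification to three cases: if $U$ contains none of the $\cC_i$'s, the sum equals $1$; if $U$ contains exactly one $\cC_i$ but not $\cC$, the sum equals $1-q^\ell\eps/(q+1)$; and if $U\supseteq \cC$, the sum equals $1+(q^{2\ell}-q^\ell)\eps$. The tightest of these is the middle case, which forces $\eps\leq (q+1)/q^\ell$; any choice of $\eps$ in this range yields a feasible $\pPr$ with $\pPr[\cC_i]<0$.

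The main (mild) obstacle is simply carrying out the case analysis for the Fourier constraints correctly; once that is done, the dimension-weighted factors $\abs{S}^\ell$ ensure that the constant term and the high-dimensional term easily absorb the negative contributions, so the resulting inequalities are immediate. Note that the construction never needs to enter subspaces of dimension greater than $2$, so the only role of the hypothesis $k_0\geq 2$ is to guarantee the existence of $\cC$; the conclusion is independent of the level $\ell$.
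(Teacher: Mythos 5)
Your construction is correct and proves the proposition, but it is a genuinely different witness from the one in the paper. The paper perturbs the true solution concentrated on a $k_0$-dimensional code $T$: it puts almost all mass on $T$, a small negative mass on a \emph{single} one-dimensional subspace $T'\le T$, and a compensating small positive mass on $\{0\}$, chosen so that the Fourier constraint at $U=T'$ is satisfied with equality. You instead perturb the trivial solution concentrated on $\{0\}$: almost all mass on $\{0\}$, a small positive mass $\epsilon$ on a $2$-dimensional code $\cC$, and negative mass $-\epsilon/(q+1)$ spread symmetrically over all $q+1$ one-dimensional subspaces of $\cC$. Both perturbations exploit the same structural phenomenon (the $\abs{S}^\ell$ weighting in the Fourier constraints lets a higher-dimensional positive term absorb a lower-dimensional negative one), but yours distributes the negativity over $q+1$ coordinates, so you must handle the extra case where $U$ contains several $\cC_i$'s; you correctly dismiss this with the observation that any two distinct lines in $\cC$ span $\cC$. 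The paper's single-negative-coordinate construction avoids that case split entirely, and its Fourier constraint at $T'$ holds with equality for every $\epsilon\in(0,1)$ (no $\ell$-dependent bound on $\epsilon$ is needed), which is marginally cleaner; your construction requires $\epsilon\le (q+1)/q^\ell$. Also note that your construction only requires a $2$-dimensional code of minimum distance $\ge d$ and is supported inside it, so it simultaneously lies in the feasible region of the weakened (dimension-constrained) formulation and gives a slightly more economical witness in that sense; the paper's uses a full $k_0$-dimensional code. Either way, both are valid and the difference is cosmetic.
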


\begin{proof}
  We construct a feasible solution to the program in~\cref{fig:pseudoprobabilities}
  having a negative pseudoprobability.
  Let $T \le \F_q^n$ be any subspace of dimension $k_0$ of minimum distance at least $d$
  and let $T' \le T$ be an arbitrary one dimensional space. Since $k_0 \ge 2$, we have $T' \ne T$.
  Let $\epsilon \in (0,1)$. Now for each $S \le \F_q^n$, we set
  \begin{align*}
    \pPr[S]
    & \coloneqq
    \begin{dcases*}
      1 - \left(\epsilon- \frac{\epsilon}{\abs{S}^\ell}\right), &  if $S = T$,\\
      -\frac{\epsilon}{\abs{S}^\ell}, & if $S = T'$,\\
      \epsilon, & if $S = \{0\}$,\\                
      0, & otherwise.
    \end{dcases*}
  \end{align*}
  We claim that the above is a feasible solution. The proof is a
  simple verification. The values $\pPr[S]$ clearly sum to $1$
  satisfying the normalization constraint. Since $T$ has minimum distance at least $d$, so does
  $T'$, hence the distance constraints are satisfied. The Fourier constraint of $T'$
  is
  \begin{align*}
    \abs{T'}^\ell \pPr[T'] + \pPr[\{0\}] = 0.
  \end{align*}
  Since all values except from $\pPr[T']$ are nonnegative and the Fourier constraint
  of $T'$ is satisfied, we have that all Fourier constraints hold.
  The nonnegative constraint for $T'$ is
  \begin{align*}
    \sum_{S \geq T'}\pPr[S]
    & =
    \pPr[T] + \pPr[T']
    =
    1 - \left(\epsilon- \frac{\epsilon}{\abs{S}^\ell}\right) - \frac{\epsilon}{\abs{S}^\ell}
    =
    1-\epsilon
    \ge
    0,
  \end{align*}
  where the last inequality follows from our choice of $\epsilon$. All
  other nonnegative constraints are easily seen to hold and we
  conclude the proof.
\end{proof}

Despite the polytope not being integral no matter how large is the
level, the following approximate integrality property holds: any given
non-integral solution becomes infeasible at a sufficiently large
level. 

\begin{proposition}\label{prop:intersectionintegral}
  Let $\{\pPr[S]\}_{S \le \F_q^n}$ be a feasible solution to level $\ell$ of program
  \cref{fig:pseudoprobabilities_orig}. If one of the variables is negative, then there exist $\ell'
  \ge \ell$ large enough such that this solution is infeasible for level $\ell'$.
\end{proposition}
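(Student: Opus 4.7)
The plan is to observe that the only constraints of the program in \cref{fig:pseudoprobabilities_orig} that depend on the level are the Fourier constraints $\sum_{S\le U}\lvert S\rvert^\ell\pPr[S]\ge 0$; normalization, distance constraints, and the nonnegativity sums $\sum_{S\ge U}\pPr[S]\ge 0$ are all level-independent. So, given a fixed solution $\pPr$ that is feasible at level $\ell$ and has at least one negative variable, we only need to exhibit a subspace $U$ for which the Fourier constraint at $U$ becomes violated once the level grows large enough.

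The key idea is that as $\el$ increases, the Fourier constraint $\sum_{S\le U}\lvert S\rvert^\el\pPr[S]\ge 0$ is more and more dominated by its largest-dimensional term. Choose $S_0\le\F_q^n$ of \emph{minimum} dimension among those with $\pPr[S_0]<0$. I will first note that $S_0\neq\{0\}$: the Fourier constraint for $U=\{0\}$ at level $\el$ already reads $\pPr[\{0\}]\ge 0$, so $\{0\}$ cannot be the offending subspace. Hence $\dim(S_0)\ge 1$ and $\lvert S_0\rvert\ge q\ge 2$. Moreover, by minimality of $\dim(S_0)$, every proper subspace $S<S_0$ satisfies $\pPr[S]\ge 0$.

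Now I would consider the Fourier constraint at $U=S_0$ and level $\el'$:
\begin{align*}
\sum_{S\le S_0}\lvert S\rvert^{\el'}\pPr[S]
&= \lvert S_0\rvert^{\el'}\pPr[S_0]+\sum_{S<S_0}\lvert S\rvert^{\el'}\pPr[S].
\end{align*}
Every $S<S_0$ has dimension strictly less than $\dim(S_0)$, so $\lvert S\rvert\le\lvert S_0\rvert/q$. Factoring out $\lvert S_0\rvert^{\el'}$, feasibility at level $\el'$ would force
\begin{align*}
\lvert\pPr[S_0]\rvert \;\le\; \sum_{S<S_0}\Bigl(\tfrac{\lvert S\rvert}{\lvert S_0\rvert}\Bigr)^{\el'}\pPr[S] \;\le\; q^{-\el'}\sum_{S<S_0}\pPr[S].
\end{align*}
The right-hand side is bounded by a level-independent constant times $q^{-\el'}$ and therefore tends to $0$ as $\el'\to\infty$, while $\lvert\pPr[S_0]\rvert>0$ is fixed. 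Choosing $\el'\ge\ell$ large enough so that $q^{-\el'}\sum_{S<S_0}\pPr[S]<\lvert\pPr[S_0]\rvert$ then violates the Fourier constraint for $U=S_0$ at level $\el'$.

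There is no real obstacle here: the only thing to verify carefully is the base case $S_0=\{0\}$, which is ruled out by feasibility at level $\el$ itself; beyond that, the estimate is a one-line geometric-series-style bound exploiting the gap $\lvert S\rvert\le\lvert S_0\rvert/q$ between a proper subspace and its parent.
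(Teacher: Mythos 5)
Your proof is correct and uses essentially the same mechanism as the paper: fix a subspace $U$ with $\pPr[U]<0$, look at the Fourier constraint at that $U$, and observe that as $\ell'\to\infty$ the term $\lvert U\rvert^{\ell'}\pPr[U]$ dominates all lower-dimensional terms (since $\lvert S\rvert/\lvert U\rvert\le 1/q$ for $S<U$), forcing a violation. The only difference is cosmetic: you pick a negative-mass subspace of \emph{minimum} dimension, giving the bonus that all the competing terms $\pPr[S]$, $S<S_0$, are nonnegative so the bound is one-sided; the paper picks one of \emph{maximum} dimension and just uses that the finitely many competing terms are fixed and their coefficients shrink. Both work, and in fact neither extremality choice is strictly necessary—any negative-mass subspace would do.
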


\begin{proof}
  Let $U \le \F_q^n$ be any space such that $\pPr[U] < 0$ and its dimension is maximum with this
  property. Note that $U$ is well-defined by assumption. We claim that the Fourier constraint
  \begin{align*}
    \sum_{S \leq U} \abs{S}^{\el'} \pPr[S] \ge 0
  \end{align*}
  becomes violated for a sufficiently large $\ell' \ge \ell$.
  By dividing this Fourier constraint by $\abs{S}^{\el'}$, only the coefficient of $\pPr[U]$
  remains $1$ while all other coefficients shrink as $\ell'$ grows since $U$ is the space
  of largest dimension appearing in the sum.
\end{proof}

Let us now show that the additional partial Fourier constraints ensure that the polytope of the
pseudoprobability formulation of the hierarchy $\PKLP^{\F_q}_\Lin(n,d,\el)$ is actually integral for
$\ell \ge n$. Note that this provides an alternative proof of exact completeness.

\begin{proposition}\label{prop:integ_ll_hier}
  The polytope defined by the constraints of the pseudoprobability formulation
  from \cref{fig:pseudoprobabilities_partial_orig} is integral for $\ell \ge n$.
\end{proposition}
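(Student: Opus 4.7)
The plan is to exploit the partial Fourier constraint family in its most degenerate form, namely with $T=U$. Under the author's abuse of notation, showing the polytope of \cref{fig:pseudoprobabilities_partial_orig} is integral amounts to showing that every feasible $\pPr$ is pointwise nonnegative, so this is what I aim to establish.

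The key observation is that the constraint family
\begin{align*}
    \sum_{T \leq S \leq U} \abs{S}^r \pPr[S] \geq 0
    \qquad
    \bigl(T \le U \le \F_q^n,\; n-\dim(U) \le r \le \ell,\; \dim(T) \le \ell - r\bigr)
\end{align*}
includes, for every single subspace $S_0 \le \F_q^n$, the instance obtained by setting $T = U = S_0$. In that case the range of summation collapses to the singleton $\{S_0\}$, so the constraint reduces to $\abs{S_0}^r \pPr[S_0] \ge 0$, which forces $\pPr[S_0] \ge 0$ immediately (since $\abs{S_0} \ge 1$).

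The only thing to verify is that some valid exponent $r$ exists for this choice. With $T = U = S_0$, the admissibility conditions become $r \in [n - \dim(S_0),\, \ell - \dim(S_0)]$ together with $r \le \ell$. Because $\ell \ge n$ and $0 \le \dim(S_0) \le n$, this interval is nonempty (one can take, e.g., $r = n - \dim(S_0)$), so the constraint is actually present in the LP. Applying this for every $S_0 \le \F_q^n$ yields nonnegativity of $\pPr$ and hence integrality of the polytope.

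There is no real obstacle here once the partial Fourier constraints have been rewritten in the pseudoprobability form of \cref{lem:pseudoprob_full}: all of the substantive work has already been done in that translation. The role of the hypothesis $\ell \ge n$ is precisely to guarantee the existence of a valid $r$ for every $S_0$; if $\ell < n$, subspaces $S_0$ with $\dim(S_0) < n - \ell$ would not be covered by this argument, which is consistent with \cref{prop:intersectionintegral} showing that the approximate-integrality phenomenon requires taking $\ell$ large enough.
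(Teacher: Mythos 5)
Your proof is correct and takes essentially the same approach as the paper: both set $T = U$ in the partial Fourier constraint so that the sum collapses to the single term $\abs{U}^r \pPr[U]$, and both observe that $\ell \ge n$ guarantees an admissible exponent (the paper picks $r = n - \dim(U)$ explicitly, you note the interval of valid $r$ is nonempty). The paper then combines pointwise nonnegativity with the normalization constraint to conclude integrality, exactly as you do via \cref{lem:mobius}.
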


\begin{proof}
  We will show that $\pPr[T] \ge 0$ for every $T \leq
  \F_q^n$. Combined with the normalization constraint $\sum_{T \le
    \F_q^n} \pPr[T] = 1$, we will have a true probability distribution
  over valid codes and thus the polytope will be integral. Recall the
  Fourier constraints
  from~\cref{fig:pseudoprobabilities_partial_orig},
  \begin{align*}
  \sum_{T \leq S \leq U} \abs{S}^r \pPr[S] \geq 0
  \qquad \forall T \le U \leq \F_q^n
  \quad :
  \qquad \substack{n-\dim(U) \le r \le \ell,\\ \dim(T) \le \ell -r}.
  \end{align*}  
  Since $\ell \ge n$, by choosing $r \coloneqq n -\dim(U)$, we can take $T\coloneqq U$. In this
  case, the sum above reduces to only the term $\pPr[U]$ with the coefficient $\abs{U}^r > 0$. This
  readily implies $\pPr[U] \ge 0$.
\end{proof}

% LocalWords:  Integrality integrality Krawtchouk pseudoprobability nonnegative nonnegativity

\section{Conclusion}

In this paper, we proved exact completeness by level $n$ of the LP
hierarchies $\KLP$ and $\PKLP$ of~\cite{CJJ22} and~\cite{LL22},
respectively. Our techniques involved passing to a formulation of
these hierarchies in terms of pseudoprobabilities (after appropriate
symmetrization under the natural $\GL_\el(\F_q)$ action) and showing
that optimum solutions are integral (i.e., are convex combinations of
true solutions, corresponding to linear codes). We also observed two
structural properties about the feasible polytopes of these
hierarchies: while for $\KLP$ no level guarantees integrality of the
polytope, for $\PKLP$, the polytope is integral by level $n$.

\medskip

As mentioned before, the completeness results of these hierarchies
should be seen as theoretical results that can serve as basis for a
theoretical analysis of the asymptotic behavior of
$A^\Lin_q(n,d)$. However, neither of the hierarchies should be computationally run as high as level $\el = n$,
since even writing the constraints at this level 
involves checking which $\ell$-dimensional subspaces
satisfy the distance constraints. If $\ell >
k_0\coloneqq\log_q A^\Lin_q(n,d)$, then we would be able to deduce the
value of $k_0$ by simply noting that no subspace of dimension $k_0+1$
satisfies the distance constraints. This simple observation makes
plausible that the hierarchies could be complete by an earlier level,
say $O(k_0)$.

\section*{Acknowledgement}

We thank Avi Wigderson for stimulating discussions during
the initial phase of this project.

\bibliographystyle{alphaurl}
\bibliography{macros,references}
% \bibliography{macros,references}

\end{document}

% LocalWords:  bigraph bigraphs bijective bigraphons bigraphon surjective endomorphisms Schwarz subcoloring blocklength
% LocalWords:  preorder duals maximality involutions subbigraphs